\documentclass[10pt]{article}
\usepackage[utf8]{inputenc}
\usepackage{amsmath}
\usepackage{amsfonts}
\usepackage{amssymb}
\usepackage{cite}
\usepackage{textcomp}
\usepackage[nottoc, notlof, notlot]{tocbibind}
\usepackage{bbm}
\usepackage{bm}
\usepackage{amsthm}
\usepackage{tikz}
\usetikzlibrary{shapes, arrows}
\usepackage{graphicx}
\usepackage[justification=centering]{subfig}
\usepackage{epstopdf}
\graphicspath{{./Figures/}}
\usepackage[margin=2.4cm]{geometry}
\usepackage[]{algorithm2e}
\setlength\arraycolsep{2pt}
\usepackage{array}
\newcolumntype{C}[1]{>{\centering\arraybackslash}p{#1}}

\DeclareMathOperator*{\argmin}{arg\,min}
\newcommand{\R}{\mathbb{R}}
\newcommand{\C}{\mathbb{C}}
\newcommand{\Z}{\mathbb{Z}}
\newcommand{\N}{\mathbb{N}}
\newcommand{\ie}{\emph{i.e.}, }
\newcommand{\eg}{\emph{e.g.}, }
\newcommand{\etal}{\emph{et al.} }
\def\V#1{{\ifcat#1\relax \boldsymbol{#1} \else \bf{#1} \fi} }   
\def\Spc#1{{\mathcal{#1}}}  
\def\M#1{{\bf{#1}}}  
\def\Op#1{{\mathrm{#1}}}  
\def\ee{\mathrm{e}} 
\def\jj{\mathrm{j}}

\newcommand{\MLone}{{\Spc M_{\Op L_1, \V \phi_0}(\R)}}
\newcommand{\HLtwo}{{\Spc H_{\Op L_2}(\R)}}

\newcommand{\opt}[1]{#1^\ast} 
\newcommand{\sopt}{\opt{s}}

\newcommand{\tik}{h}
\newcommand{\autocorr}{\rho}
\newcommand{\sqrta}{g}
\newcommand{\bseq}{b}
\newcommand{\sqrtb}{b^{1/2}}
\newcommand{\dis}{\mathrm{d}}

\newtheorem{theorem}{Theorem}
\newtheorem{definition}{Definition}
\newtheorem{proposition}{Proposition}

\newtheorem{remark}{Remark}

\title{Continuous-Domain Formulation of Inverse Problems for Composite Sparse-Plus-Smooth Signals}

\author{Thomas~Debarre,
			Shayan~Aziznejad,
        and~Michael~Unser,
\thanks{The authors are with the Biomedical Imaging Group, École polytechnique fédérale de Lausanne, 1015 Lausanne,
Switzerland (e-mail: thomas.debarre@gmail.com; shayan.aziznejad@epfl.ch; michael.unser@epfl.ch).}
\thanks{This research was supported by the European Research Council (ERC) under the European Union’s Horizon 2020 research and innovation programme, Grant 692726-GlobalBioIm, and by the Swiss National Science Foundation, Grant 200020\_184646/1.}
}
\begin{document}

\maketitle
\begin{abstract}
We present a novel framework for the reconstruction of 1D composite signals assumed to be a mixture of two additive components, one sparse and the other smooth, given a finite number of linear measurements. We formulate the reconstruction problem as a continuous-domain regularized inverse problem with multiple penalties. We prove that these penalties induce reconstructed signals that indeed take the desired form of the sum of a sparse and a smooth component. We then discretize this problem using Riesz bases, which yields a discrete problem that can be solved by standard algorithms. Our discretization is exact in the sense that we are solving the continuous-domain problem over the search space specified by our bases without any discretization error. We propose a complete algorithmic pipeline and demonstrate its feasibility on simulated data.
\end{abstract}

\section{Introduction}

In the traditional discrete formalism of linear inverse problems, the goal is to recover a signal $\V{c}_0 \in \R^N$ based on some measurement vector $\V{y} \in \R^M$. These measurements are typically acquired via a linear operator $\M{H} \in \R^{ M\times N }$ that models the physics of our acquisition system (forward model), so that $\M{H} \V{c}_0 \approx \V{y}$. The recovery is often achieved by solving an optimization problem that aims at minimizing the discrepancy between the measurements $\M{H} \V{c}$ of the reconstructed signal $\V{c}$ and the acquired data $\V{y}$. This data fidelity is measured with a suitable convex loss function $E:\R^M \times \R^M \to \R$, the prototypical example
being the quadratic error $E(\V x, \V y) =\frac12 \Vert \V x - \V y \Vert^2_2$. A regularization term is often added to the cost functional, which yields the optimization problem
\begin{align}
\label{eq:generaL_iiscrete_pb_intro}
\argmin_{\V{c} \in \R^N} \left\{ \underbrace{ E (\M{H} \V{c}, \V{y} )}_{\text{Data fidelity}} + \underbrace{\lambda \Spc{R} (\M{L} \V{c})}_{\text{Regularization}} \right\},
\end{align}
where $\Spc{R}$ is the regularization functional, $\M{L}$ specifies a suitable transform domain, and $\lambda > 0$ is a tuning parameter that determines the strength of the regularization. The use of regularization can have multiple motivations:
\begin{enumerate}
\item to handle the ill-posedness of the inverse problem, which occurs when different signals yield identical measurements;
\item to favor certain types of  reconstructed signal (\eg sparse or smooth) based on our prior knowledge;
\item to improve the conditioning of the inverse problem and thus increase its numerical stability and robustness to noise.
\end{enumerate}
Historically, the first instance of regularization dates back to Tikhonov~\cite{tikhonov1963solution} with a quadratic regularization functional $\Spc{R} = \Vert \cdot \Vert_2^2$. Tikhonov regularization constrains the energy of $\M{L} \V{c}$ which, when $\M L$ is a finite-difference matrix, leads to a smooth signal $\V c$. Tikhonov regularization has the practical advantage of being mathematically tractable which leads to a closed-form solution.
More recently, there has been growing interest in $\ell_1$ regularization $\Spc{R} = \Vert \cdot \Vert_1$, which has peaked in popularity for compressed sensing (CS)~\cite{donoho2006compressed, candes2006compressive, eldar2012compressed, foucart2013mathematical}. With $\ell_1$ regularization, the prior assumption is that the transform signal $\M L \V{c}_0$ is sparse, meaning that it has few nonzero coefficients: indeed, the $\ell_1$ norm can be seen as a convex relaxation of the $\ell_0$ ``norm", which counts the number of nonzero entries of a vector. The sparsity-promoting effect of $\ell_1$ regularization is well understood and documented~\cite{tibshirani1996regression, candes2006stable, unser2016representer}. It is now generally considered to be superior to Tikhonov regularization for most applications~\cite{hastie2015statistical}. Moreover, despite its non-differentiability, numerous efficient proximal algorithms based on the proximity operator of the $\ell_1$ norm have emerged to solve $\ell_1$-regularized problems~\cite{beck2009fast, beck2009fasta, chambolle2010first, boyd2011distributed}.

\subsection{Discrete Inverse Problems for Composite Signals}
Despite their success, $\ell_1$ and $\ell_2$ regularization methods are too simple to model many real-world signals.
In this paper, we investigate composite models of the form $s = s_1 + s_2$ where the two components have different characteristics. More precisely, $s_1$ is assumed to be sparse in some given domain and is treated with $\ell_1$ regularization, while $s_2$ is assumed to be smooth 
and is treated with $\ell_2$ regularization.
In discrete settings, a natural way of reconstructing such signals is to solve the optimization problem
\begin{align}
\label{eq:discrete_pb_intro}
\min_{\V{c}_1, \V{c}_2 \in \R^N} \left\{ E(\M{H} (\V{c}_1 +\V{c}_2 ), \V{y}) + \lambda_1 \Vert \M{L}_1 \V{c}_1 \Vert_1 + \lambda_2 \Vert \M{L}_2 \V{c}_2 \Vert_2^2 \right\},
\end{align}
where $\V c_1, \V c_2$ are the two components of the signal $\V c = \V c_1 + \V c_2$, $\lambda_1, \lambda_2 > 0$ are tuning parameters, $\M{L}_1 \in \R^{N \times N}$ is a sparsifying transform for $\V c_1$, and $\M{L}_2 \in \R^{N \times N}$ is a low-energy-promoting transform for $\V c_2$. Amongst others, this modeling is considered in~\cite{demol2004inverse, gholami2013balanced, naumova2014minimization, daubechies2016sparsity, grasmair2018adaptive, debarnot2021learning}.

\subsection{Continuous-Domain Formulation}

Until now, we have focused on the discrete setting, as it constitutes the vast majority of the inverse-problem literature for the obvious reason of computational feasibility. However, most real-world signals are inherently continuous.
Therefore, when feasible, to formulate the inverse problem in the continuous domain is a natural and desirable objective.

In this work, we adapt the discrete approach of~\eqref{eq:discrete_pb_intro} to 1D continuous-domain composite signals by solving an optimization problem of the form
\begin{align}
\label{eq:continuous_pb_intro}
\min_{s_1, s_2} \left\{ E(\V{\nu}({s}_1+{s}_2), \V{y} ) +  \lambda_1 \|\Op{L}_1\{ s_1\} \|_\Spc{M} + \lambda_2 \Vert \Op{L}_2\{s_2\} \Vert_{L_2}^2 \right\},
\end{align}
where $s_1, s_2$ are the two components of the signal $s = s_1 + s_2 : \R \to \R$, $\V \nu = (\nu_1, \ldots, \nu_M) : s \mapsto \V \nu (s) \in \R^M$ is the continuous-domain linear forward model, and $\Op L_1$, $\Op L_2$ are suitable continuously defined regularization operators. Typical choices are $\Op L_i = \Op D^{N_{0, i}}$ for $i \in \{1, 2\}$, where $\Op D$ is the derivative operator and $N_{0, i}$ the order of the derivative. The regularization norm $\Vert \cdot \Vert_{\Spc M}$ is the total-variation (TV) norm for measures, which is the continuous counterpart of the discrete $\ell_1$ norm \cite{candes2014towards, unser2017splines, aziznejad2018multi}.
We refer to this term as the generalized TV (gTV) regularizer due to the presence of the operator $\Op L_1$. Finally, $\Vert \cdot \Vert_{L_2}$ is the usual norm over the space $L_2(\R)$ of signals with finite energy; we refer to the corresponding term as the \emph{generalized Tikhonov} (gTikhonov) regularizer, which promotes smoothness in combination with the operator $\Op L_2$.

\subsection{Representer Theorems and Discretization}
A classical way of discretizing a continuous-domain problem is to reformulate it as a finite-dimensional one by relying on a \emph{representer theorem} that gives a parametric form of the solution. Prominent examples include representer theorems for problems formulated over reproducing-kernel Hilbert spaces (RKHS), which are foundational to the field of machine learning~\cite{wahba1990spline, schoelkopf2001generalized}. As demonstrated in~\cite[Theorem 3]{gupta2018continuous}, the minimization problem~\eqref{eq:continuous_pb_intro} over the component $s_2$ (with a fixed $s_1$) --- \ie gTikhonov regularization --- falls into this category: the representer theorem states that there is a unique solution of the form
\begin{align}
\label{eq:gTikh_sol_intro}
\sopt_2(x)  = p_2(x) + \sum_{m=1}^M a_{m,2} \tik_m(x),
\end{align}
where the additional component $p_2$ lies in the null space of $\Op L_2$ (\ie $\Op L_2 \{p_2 \} = 0$), $\tik_m$ is a (typically quite smooth) kernel function that is fully determined by the choice of $\nu_m$ and $\Op L_2$, and $a_{m, 2} \in \R$ are expansion coefficients. Therefore, to solve the continuous-domain problem, one need only optimize over the $a_{m, 2}$ coefficients and the null-space component $p_2$ which lives in a finite-dimensional space. This leads to a standard finite-dimensional problem with Tikhonov regularization.

Concerning the minimization over the component $s_1$ (gTV regularization), several representer theorems give a parametric form of a sparse solution in different settings~\cite{fisher1975spline, unser2017splines, boyer2019representer, bredies2019sparsity, fageot2020tv}. The case of our exact setting is tackled by~\cite[Theorem 4]{gupta2018continuous}, which states that there is an $\Op L_1$-spline solution of the form
\begin{align}
\label{eq:gTV_sol_intro}
\sopt_1(x)  = p_1(x) + \sum_{k=1}^K a_{k,1} \rho_{\Op L_1}(x - x_k),
\end{align}
where $a_{k,1}, x_k \in \R$, $\rho_{\Op L_1}$ is a Green's function of $\Op L_1$ (\ie $\Op L_1 \{ \rho_{\Op L_1} \} = \delta$, where $\delta$ is the Dirac impulse), $K$ is the number of atoms of $s_1$ which is bounded by $K \leq (M-N_{0,1})$, $N_{0,1}$ being the dimension of the null space of $\Op L_1$, and $p_1$ lies in the null space of $\Op L_1$. For example, when $\Op L_1 = \Op D^{N_{0, 1}}$, $s_1$ is a piecewise polynomial of degree $(N_{0,1} - 1)$ with smooth junctions at the knots $x_k$. These representer theorems have paved the way for various exact discretization methods. In the gTikhonov case, one can optimize over the $a_{m,2}$ coefficients in~\eqref{eq:gTikh_sol_intro} directly~\cite{gupta2018continuous}. For the gTV case~\eqref{eq:gTV_sol_intro}, grid-based techniques using a well-conditioned B-spline basis~\cite{debarre2019b} as well as grid-free techniques~\cite{flinth2019exact} have been proposed.

\subsection{Our Contribution}
In this work, we show that the representer theorems presented in the previous section can be combined into a composite one when dealing with Problem~\eqref{eq:continuous_pb_intro}. More specifically, we prove that there exists a solution to~\eqref{eq:continuous_pb_intro} of the form $\sopt_1 = \sopt_1 + \sopt_2$ such that $\sopt_1$ is of the form~\eqref{eq:gTV_sol_intro} and $\sopt_2$ is of the form~\eqref{eq:gTikh_sol_intro}: a ``sparse plus smooth'' solution.
Building on this representation, we propose an exact discretization scheme. Both components $s_i$ for $i \in \{ 1, 2\}$ are expressed in a suitable Riesz basis as $s_i = \sum_k c_i[k] \varphi_{i, k}$, where $c_i[k]$ are the coefficients to be optimized. This leads to an infinite-dimensional optimization problem reminiscent of the infinite-dimensional compressed sensing framework of Adcock and Hansen~\cite{adcock2015generalized}.

To solve this infinite-dimensional problem numerically, we cast it as a finite-dimensional problem under some mild assumptions. This requires a careful handling of the boundaries of our interval of interest. In our implementation, we choose basis functions $\varphi_{1, k} = \beta_{\Op L_1} ( \cdot - k)$ and $\varphi_{2, k} = \beta_{\Op L_2^\ast \Op L_2}(\cdot - k)$, where $\beta_{\Op L}$ is the B-spline for the operator $\Op L$. B-splines are popular choices of basis functions \cite{boor2001practical, unser1993b, unser1999splines}, in large part due to their minimal-support property. Indeed, $\beta_{\Op L_i}$ has finite support when $\Op L_i = \Op D^{N_0}$, and it is the shortest-support generating function of the space of uniform $\Op L_i$ splines \cite{schoenberg1973cardinal}. We show that optimizing over the spline coefficients leads to a discrete problem similar to~\eqref{eq:discrete_pb_intro} of the form
\begin{align}
\label{eq:discretized_pb_intro}
\min_{(\V{c}_1, \V{c}_2) \in \R^{N_1} \times \R^{N_2} } \left\{ E(\M{H}_1\V{c}_1 + \M{H}_2\V{c}_2 , \V{y}) + \lambda_1 \Vert \M{L}_1 \V{c}_1 \Vert_1 + \lambda_2 \Vert \M{L}_2 \V{c}_2 \Vert_2^2 \right\},
\end{align}
where $\M H_i \in \R^{M \times N_i}$ and $\M L_i \in \R^{P_i \times N_i}$ for $i\in \{1, 2\}$. This discretization is exact in the sense that it is equivalent to the continuous problem~\eqref{eq:continuous_pb_intro} when each component $s_i$ lies in the space generated by the basis functions $\{\varphi_{i, k} \}_{k \in \Z}$. This is a consequence of our informed choice of these basis functions $\varphi_{i, k}$. Moreover, the short support of the B-splines leads to well-conditioned $\M H_i$ matrices and, thus, to a computationally feasible problem.

\subsection{Related Works}
The use of multiple regularization penalties is quite common in the litterature. However, in most cases, each penalty is applied to the full signal instead of a component-wise \cite{belge2002efficient, roth2005fields, chen2008multi, lu2010multi, wang2012multi, abhishake2016multi}. A prominent example of such an approach is the elastic net \cite{zou2005regularization}, which is widely used in statistics. The spirit of these approaches is however quite different from ours: the reconstructed signal is encouraged to satisfy different priors \emph{simultaneously}. Conversely, in~\eqref{eq:discrete_pb_intro}, each component satisfies different priors independently from the others, which will give very different results.

The model of Meyer \cite{meyer2001oscillating} and its generalization by Vese and Osher \cite{vese2003modeling, vese2004image} follow the same idea as Problem~\eqref{eq:continuous_pb_intro}, with the important difference that they use calculus-of-variation techniques to solve it. There is a connection as well with the Mumford-Shah functional \cite{mumford1989optimal}, which is commonly used to segment an image in piecewise-smooth regions. The main difference lies in the fact that the optimization is not performed over the different components of the signal, but over the region boundaries. Another difference is that these models assume that one has full access to the noisy signal over a continuum, whereas~\eqref{eq:continuous_pb_intro} assumes that we only have access to some discrete measurements specified by the forward model $\V \nu$.

\subsection{Outline}
In Section~\ref{sec:preliminaries}, we give the necessary mathematical preliminaries to formulate our optimization problem. In Section~\ref{sec:continuous_pb}, we formulate the continuous-domain problem and present our representer theorem, which is our main theoretical result. In Section~\ref{sec:discretization}, we explain our discretization strategy, which relies on the selection of suitable Riesz bases. Finally, in Section~\ref{sec:applications}, we present experiments on simulated data.

\section{Preliminaries}
\label{sec:preliminaries}
\subsection{Operators and Splines}
The crucial elements of our formulation are the regularization operators $\Op{L}_1$ and $\Op{L}_2$. In this section, we specify which type of operators are suitable in our framework.

Let $\Spc{S}'(\R)$ denote the space of tempered distributions, defined as the dual of the Schwartz space $\Spc{S}(\R)$ of infinitely smooth functions on $\R$ whose successive derivatives are rapidly decaying. Let $\Spc{F}$ be the generalized Fourier transform with $\widehat{f} \triangleq \Spc{F} \lbrace f \rbrace$. Then, it is a standard result in distribution theory that the frequency response $\widehat{L} = \mathcal{F}\{\Op L\{\delta\}\}$ of an ordinary differential operator $\Op{L} : \Spc{S}'(\R) \to \Spc{S}'(\R)$ is a slowly increasing smooth function $\widehat{L}: \R \to \C$ ~\cite[Chapter 7, §5]{schwartz1951theorie}. Moreover, for any $f \in \Spc{S}'(\R)$, we have that $\Spc{F} \lbrace \Op{L} \lbrace f \rbrace \rbrace = \widehat{L} \widehat{f} \in \Spc S'(\R)$.
Next, we require $\Op{L}$ to be spline-admissible in the sense of Definition~\ref{def:spline_admissible}.
\begin{definition}[Spline-admissible operator]
\label{def:spline_admissible}
A continuous LSI operator $\Op{L} : \Spc{S}'(\R) \to \Spc{S}'(\R)$ is \emph{spline-admissible} if it verifies the following properties:
\begin{itemize}
\item there exists a function of slow growth $\rho_{\Op{L}} : \R \to \R$ (the Green's function of $\Op{L}$) that satisfies $\Op{L} \lbrace \rho_{\Op{L}} \rbrace = \delta$;
\item its null space $\Spc{N}_{\Op{L}}  = \{ f \in \Spc{S}'(\R) :  \Op{L} \lbrace f \rbrace = 0 \}$ has finite dimension $N_0$.
\end{itemize}
\end{definition}

The prototypical example of a spline-admissible operator is the multiple-order derivative $\Op{L} = \Op{D}^{N_0}$ for $N_0 \geq 1$. Its causal Green's function is the one-sided power function $\rho_{\Op{L}} = \frac{x_+^{N_0-1}}{(N_0-1)!}$, where $x_+ = \max(0, x)$. The null space of $\Op{L}$ is the space of polynomials of degree less than $N_0$.

A spline-admissible operator $\Op{L}$ specifies the family of $\Op{L}$-splines provided in Definition~\ref{def:non_uniform_spline}.
\begin{definition}[Nonuniform $\Op{L}$-spline]
\label{def:non_uniform_spline}
Let $\Op{L}$ be a spline-admissible operator in the sense of Definition~\ref{def:spline_admissible}. A nonuniform $\Op{L}$-spline with $K$ knots $x_1 < \cdots < x_K$ is a function $s : \R \mapsto\R$ that verifies
\begin{align}
\label{eq:spline_def_innovation}
\Op{L} \lbrace s \rbrace(x) = \sum_{k=1}^K a_k \delta (x - x_k),
\end{align}
where $a_k \in \R$ is the amplitude of the $k$th singularity. The weighted sum of Dirac impulses in \eqref{eq:spline_def_innovation} is known as the \emph{innovation} of $s$. The spline $s$ can equivalently be written as
\begin{align}
\label{eq:spline_def_green}
s(x) = p(x) + \sum_{k=1}^K a_k \rho_{\Op{L}}(x-x_k),
\end{align}
where $p \in \Spc{N}_{\Op{L}}$. 
\end{definition}
For example, the operator $\Op{L} = \Op{D}^{N_0}$ leads to the well-known polynomial splines, which are piecewise polyomials of degree $(N_0-1)$ and of differentiability class $\Spc{C}^{N_0-2}$.


\subsection{Native Spaces}

\subsubsection{Sparse Component}

The other crucial elements of our framework are the native spaces for each component. Let $\Spc{M}(\R)$ be the space of bounded Radon measures, which is known by the Riesz-Markov theorem~\cite[Chapter 6]{rudin1986real} to be the continuous dual of $C_0(\R)$. The latter is the space of continuous functions vanishing at infinity, which is a Banach space when equipped with the supremum norm $\Vert \cdot \Vert_\infty$. The sparsity-promoting regularization norm $\Vert \cdot \Vert_\Spc{M}$ is defined for a tempered distribution $w \in \Spc{S}'(\R)$ as
\begin{align}
\Vert w \Vert_\Spc{M} \triangleq \sup_{\varphi \in \Spc{S}(\R), \Vert \varphi \Vert_\infty = 1} \langle w, \varphi \rangle.
\end{align}
Practically, the two critical features of the $\Vert \cdot \Vert_\Spc{M}$ norm are the following:
\begin{enumerate}
\item it generalizes the $L_1$ norm in the sense that $\Vert w \Vert_\Spc{M} = \Vert w \Vert_{L_1}$ for any $w \in L_1(\R)$;
\item the $\Vert \cdot \Vert_\Spc{M}$ norm of a weighted sum of Dirac impulses is $\Vert \sum_{k} a_k \delta(\cdot - x_k) \Vert_\Spc{M} = \sum_k \vert a_k \vert$.
\end{enumerate}

Accordingly, the native space for $s_1$ in~\eqref{eq:continuous_pb_intro} is defined as
\begin{align}
\label{eq:native_space_1}
\Spc{M}_{\Op{L}_1}(\R) = \{ s \in \Spc{S}'(\R): \Op{L}_1 \{ s\} \in \Spc{M}(\R) \},
\end{align}
which is a Banach space when equipped with the direct-sum topology. It is also the largest space for which the regularization is well-defined. We refer to~\cite{unser2019native} for technical details on the construction of $\Spc{M}_{\Op{L}_1}(\R)$.

\subsubsection{Smooth Component}
The regularization norm $\Vert \cdot \Vert_{L_2}$ for the smooth component $s_2$ in~\eqref{eq:continuous_pb_intro} is defined over the Hilbert space $L_2(\R)$. The corresponding native space of the smooth component $s_2$ is the Hilbert space
\begin{align}
\label{eq:native_space_2}
\Spc{H}_{\Op{L}_2}(\R) = \{ f \in \Spc{S}'(\R): \Op{L}_2 \{ f\} \in L_2(\R) \}.
\end{align}

\subsubsection{Boundary Conditions}
Finally, to ensure the well-posedness of our optimization problem, boundary conditions need to be introduced for one of the two native spaces. Let $\Spc{N}_0 = \Spc{N}_{\Op{L}_1} \cap  \Spc{N}_{\Op{L}_2}$ be the intersection of the null spaces. We introduce a biorthogonal system $(\V{\phi}_0, \V{p}_0)$ for $\Spc{N}_0$ in the sense of~\cite[Definition 3]{unser2017splines}. An example of a valid choice is given in Appendix~\ref{app:boundary_conditions}. The search space with boundary conditions $\V\phi_0$ is then given by
\begin{align}
\label{eq:native_space_1_restricted}
\Spc{M}_{\Op{L}_1, \V\phi_0}(\R) = \{ f\in \Spc{M}_{\Op{L}_1}(\R) : \V\phi_0(f) = \V{0} \}.
\end{align}

\section{Continuous-Domain Inverse Problem}
\label{sec:continuous_pb}

Now that the relevant spaces have been introduced, we present in Theorem~\ref{thm:continuous_RT} the optimization task that we use to reconstruct sparse-plus-smooth composite signals. This representer Theorem gives a parametric form of a solution of our optimization problem; the proof is given in Appendix~\ref{app:proof_RT}.

\begin{theorem}[Continuous-domain representer theorem]
\label{thm:continuous_RT}
Let $\Op{E} : \R^M \times \R^M \to \R$ be a nonnegative, coercive, proper,  convex, and lower-semicontinuous functional. Let $\Op{L}_1, \Op{L}_2$ be spline-admissible operators in the sense of Definition~\ref{def:spline_admissible} and let $\V\nu = (\nu_1, \hdots, \nu_M)$ be a linear measurement operator composed of the $M$ linear functionals $\nu_m: f \mapsto \nu_m(f) \in \R$ that are $\text{weak}^\ast$-continuous over $\Spc{M}_{\Op{L}_1} (\R)$ and over $\Spc{H}_{\Op{L}_2}(\R)$. We assume that $\Spc{N}_{\V\nu} \cap (\Spc{N}_{\Op{L}_1} + \Spc{N}_{\Op{L}_2}) = \{0 \}$, where $\Spc{N}_{\V\nu}$ is the null space of $\V\nu$ (well-posedness assumption). Then, for any $\lambda_1, \lambda_2 > 0$, the optimization problem
\begin{align}
\label{eq:continuous_pb}
 \Spc{S} =& \left\{ \argmin_{\substack{s_1 \in \MLone \\ s_2 \in \HLtwo}}  \Spc{J}(s_1, s_2) \right\} \qquad \text{with} \nonumber \\
\Spc{J}(s_1, s_2) =& \Op{E}(\V{\nu}({s}_1+{s}_2),\V{y}) +  \lambda_1 \|\Op{L}_1\{ s_1\} \|_\Spc{M}+ \lambda_2 \Vert \Op{L}_2\{s_2\} \Vert_{L_2}^2 
\end{align}
has a solution $(\sopt_1, \sopt_2) \in \Spc{S}$ with the following components:
\begin{itemize}
    \item {the component $\sopt_1$ is a nonuniform $\Op{L}_1$-spline of the form
\begin{align}
\label{eq:continuous_sol_s1}
\sopt_1(x) = p_1(x) + \sum_{k=1}^{K_1} a_{1,k} \rho_{\Op{L}_1}( x - x_{k} )
\end{align}
for some $K_1 \leq (M - N_{0,1})$, where $p_1 \in \Spc{N}_{\Op{L}_1} $, and $a_{1,k}, x_{k} \in \R$;
    }
    \item {
     the component $\sopt_2$ is of the form
\begin{align}
\label{eq:continuous_sol_s2}
\sopt_2(x)  = p_2(x) + \sum_{m=1}^M a_{2, m} \tik_m(x),
\end{align}
where $\tik_m(x) = \left( \nu_m \ast \Spc{F}^{-1} \left\{ \frac{1}{\vert \widehat{L}_2 \vert ^2} \right\} \right) (x)$, $p_2 \in \Spc{N}_{\Op{L}_2}$,  $a_{2,k} \in \R$, and where $\sum_{m=1}^M a_{2, m} \langle q_2, \nu_m \rangle = 0$ for any $q_2 \in \Spc{N}_{\Op{L}_2}$.
    }
\end{itemize}
 Moreover, for any pair of solutions $(\sopt_1, \sopt_2), (\tilde{s}^\ast_1, \tilde{s}^\ast_2) \in \Spc S$, $\sopt_2$ and $\tilde{s}^\ast_2$ differ only up to an element of the null space $\Spc{N}_{\Op{L}_2}$, so that $(\sopt_2 - \tilde{s}^\ast_2) \in \Spc{N}_{\Op{L}_2}$.
\end{theorem}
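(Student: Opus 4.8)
The plan is to prove the theorem in two logically distinct stages. First I would establish the existence of a minimizer of the form stated, and then address the uniqueness-up-to-null-space claim for the smooth component, which is the final sentence and the focus here. The overall architecture exploits the fact that the cost functional $\Spc J$ decouples into a fidelity term that depends only on the sum $s_1 + s_2$ and two separate regularization terms. The key structural observation is that the gTikhonov term $\lambda_2 \Vert \Op L_2 \{ s_2 \} \Vert_{L_2}^2$ is \emph{strictly} convex in $\Op L_2 \{ s_2 \}$, whereas the gTV term is merely convex. This asymmetry is precisely what forces the $L_2$ component to be essentially unique while leaving the sparse component free to vary within the solution set.

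To prove the final statement, I would proceed by a convexity-and-contradiction argument. Suppose $(\sopt_1, \sopt_2)$ and $(\tilde s_1^\ast, \tilde s_2^\ast)$ are two solutions in $\Spc S$, both attaining the minimal value $\Spc J^\ast$. By convexity of $\Spc J$, the midpoint $\bigl( \tfrac12(\sopt_1 + \tilde s_1^\ast), \tfrac12(\sopt_2 + \tilde s_2^\ast) \bigr)$ is also feasible and satisfies $\Spc J \bigl( \tfrac12(\sopt_1 + \tilde s_1^\ast), \tfrac12(\sopt_2 + \tilde s_2^\ast) \bigr) \leq \Spc J^\ast$, hence equals $\Spc J^\ast$. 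I would then examine each of the three terms of $\Spc J$ at this midpoint. The fidelity term and the gTV term can each only satisfy the Jensen inequality with equality (they cannot strictly decrease, as that would contradict minimality), so no information is lost there. The decisive step is the gTikhonov term: writing $u = \Op L_2 \{ \sopt_2 \}$ and $\tilde u = \Op L_2 \{ \tilde s_2^\ast \}$, both in $L_2(\R)$, the strict convexity of the squared $L_2$-norm gives
\begin{align}
\label{eq:strict_convexity_step}
\Bigl\Vert \tfrac12 (u + \tilde u) \Bigr\Vert_{L_2}^2 \leq \tfrac12 \Vert u \Vert_{L_2}^2 + \tfrac12 \Vert \tilde u \Vert_{L_2}^2,
\end{align}
with equality if and only if $u = \tilde u$. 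Since overall minimality forces equality throughout, I would conclude $\Op L_2 \{ \sopt_2 \} = \Op L_2 \{ \tilde s_2^\ast \}$, i.e. $\Op L_2 \{ \sopt_2 - \tilde s_2^\ast \} = 0$, which is exactly the assertion that $(\sopt_2 - \tilde s_2^\ast) \in \Spc N_{\Op L_2}$.

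There is a subtlety I would need to handle carefully before invoking the midpoint argument: one must verify that the midpoint pair genuinely lies in the feasible space $\MLone \times \HLtwo$, which follows from these being vector spaces, and that $\Spc J$ is finite at both solutions so that the equality-case analysis is not vacuous. The main obstacle, however, is the rigorous justification that minimality forces \emph{simultaneous} equality in all three Jensen-type inequalities rather than allowing a strict decrease in one term to be compensated by the structure of another. The clean way around this is to note that all three terms are individually convex along the segment joining the two solutions, and a sum of convex functions attains its minimum at an interior point of a segment only if each summand is affine (hence constant, given convexity) along that segment; applying this to the gTikhonov summand and using its strict convexity yields the result. I would also remark, as the theorem implicitly does, that no analogous uniqueness holds for $\sopt_1$, precisely because the gTV term lacks strict convexity, so distinct sparse components may coexist in $\Spc S$.
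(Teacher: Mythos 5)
Your proposal addresses only the last sentence of Theorem~\ref{thm:continuous_RT}. The two main assertions --- that the solution set $\Spc S$ is nonempty, and that it contains a pair $(\sopt_1,\sopt_2)$ with the parametric forms \eqref{eq:continuous_sol_s1} and \eqref{eq:continuous_sol_s2} --- are deferred with ``First I would establish the existence of a minimizer of the form stated'' and never argued. This is a genuine gap, not a matter of presentation, because both claims require substantial work and specific hypotheses that your argument never touches. Existence is where the well-posedness assumption $\Spc{N}_{\V\nu} \cap (\Spc{N}_{\Op{L}_1} + \Spc{N}_{\Op{L}_2}) = \{0\}$ and the coercivity of $\Op E$ enter: the regularization terms control $\Op L_1\{s_1\}$ and $\Op L_2\{s_2\}$ but are blind to the null-space components, so one must combine coercivity, weak$^\ast$-continuity of $\V\nu$, and well-posedness to bound those components, recast \eqref{eq:continuous_pb} as minimization of a weak$^\ast$-lower-semicontinuous functional over a bounded (hence weak$^\ast$-compact, by Banach--Alaoglu) set, and conclude by the generalized Weierstrass theorem; your proposal never invokes the well-posedness assumption anywhere, which is a symptom of the missing step. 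The parametric form is likewise a separate argument: the paper takes an arbitrary solution $(\tilde s_1,\tilde s_2)$, considers the auxiliary constrained problems of minimizing $\|\Op L_1\{s_1\}\|_{\Spc M}$ subject to $\V\nu(s_1)=\V\nu(\tilde s_1)$ and of minimizing $\|\Op L_2\{s_2\}\|_{L_2}$ subject to $\V\nu(s_2)=\V\nu(\tilde s_2)$, invokes the known representer theorems of \cite{unser2017splines} and \cite[Theorem 3]{gupta2018continuous} to obtain minimizers $s_1^\ast$ and $s_2^\ast$ of the forms \eqref{eq:continuous_sol_s1} and \eqref{eq:continuous_sol_s2}, and observes that substituting them back leaves the cost unchanged, so $(s_1^\ast,s_2^\ast)\in\Spc S$. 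None of this can be absorbed into the convexity argument you give.

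For the one claim you do prove, your argument is correct and is essentially the paper's own: form a convex combination of two solutions, note that optimality forces equality in the convexity inequality term by term, and apply strict convexity of $\Vert\cdot\Vert_{L_2}^2$ to $u=\Op L_2\{\sopt_2\}$ and $\tilde u = \Op L_2\{\tilde s^\ast_2\}$ to conclude $\Op L_2\{\sopt_2-\tilde s^\ast_2\}=0$. Your resolution of the ``simultaneous equality'' subtlety (the sum of convex functions being constant along the segment forces each summand to be affine there) is sound, but the parenthetical ``hence constant, given convexity'' is wrong: affinity along the segment does not imply constancy, since the individual terms may trade off between the two solutions (only their sum is constant). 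The slip is harmless, because your conclusion needs only affinity of the gTikhonov term together with its strict convexity, but it should be removed.
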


A pleasing outcome of Theorem~\ref{thm:continuous_RT} is that it combines Theorems 3 and 4 of~\cite{gupta2018continuous} into one. There is, however, an added technicality due to the boundary conditions $\V\phi_0$. The latter are necessary to ensure the well-posedness of Problem~\eqref{eq:continuous_pb}. Otherwise, for any $(\sopt_1, \sopt_2) \in \Spc{S}$ and $p \in \Spc{N}_0$, we would have that
$(\sopt_1 + p, \sopt_2 - p) \in \Spc{S}$ which would imply that $\Spc{S}$ is unbounded. Note, however, that these conditions do not restrict the search space, since $\Spc{M}_{\Op{L}_1, \V{\phi}_0} (\R) + \Spc{H}_{\Op{L}_2}(\R) = \Spc{M}_{\Op{L}_1} (\R) + \Spc{H}_{\Op{L}_2}(\R)$.

\section{Exact Discretization}
\label{sec:discretization}


In order to discretize Problem~\eqref{eq:continuous_pb}, we restrict the search spaces $\MLone$ and $\HLtwo$. The standard approach to achieve this is to choose a sequence of appropriate basis functions $\{ \varphi_{i, k} \}_{k \in \Z}$ that span the reconstruction spaces
\begin{align}
\label{eq:search_space}
V_i(\R) = \left\{\sum_{k\in \Z} c_i[k] \varphi_{i,k}:  c_i \in V_i(\Z) \right\}
\end{align}
for $i \in \{1, 2\}$ that are subject to the constraints $V_1(\R) \subset \MLone$ and $V_2(\R) \subset \HLtwo$. These continuous spaces are linked to discrete spaces $V_i(\Z)$, the choices of which will be made explicit in \eqref{eq:native_space_sequence1} and \eqref{eq:native_space_sequence2}. More precisely, there is a one-to-one mapping between them using the basis functions $\varphi_{i,k}$.



\subsection{Riesz Bases and B-Splines}
For numerical purposes, a desirable property is that our basis functions satisfy the Riesz property. Riesz bases are highly important concepts in that they generalize orthonormal bases, while leaving more flexibility for other desirable properties such as short support \cite{daubechies1992ten}.
\begin{definition}[Riesz basis]
\label{def:riesz}
A sequence of functions $\{ \varphi_k \}_{k \in \Z}$ with $\varphi_k \in L_2(\R)$ is said to be a Riesz basis if there exist constants $0 < A \leq B$ such that, for any $c \in \ell_2(\Z)$, we have that
\begin{align}
\label{eq:riesz}
A \Vert c \Vert_{\ell_2} \leq \left\Vert \sum_{k \in \Z} c[k] \varphi_{k} \right\Vert_{L_2} \leq B \Vert c \Vert_{\ell_2}.
\end{align}
\end{definition}

 Popular examples of Riesz bases are B-spline bases, which are introduced in Definition~\ref{def:B-spline}.
\begin{definition}[B-spline]
\label{def:B-spline}
The B-spline for a spline-admissible operator $\Op L$ is characterized by a finite-difference-like filter $(d_{\Op L}[k])_{k \in \Z}$, and is defined as 
\begin{align}
\label{eq:B-spline}
\beta_{\Op L} (x) = \Spc{F}^{-1} \left\{ \frac{\sum_{k \in \Z} d_{\Op L}[k] \ee^{-\jj k (\cdot )} }{ \widehat{L}(\cdot) } \right\}(x).
\end{align}
The criteria for choosing a valid filter $d_{\Op L}$ for a general class of operators $\Op L$ are given in \cite[Theorem 2.7]{amini2018universal}.
\end{definition}

The best-known example of a B-spline is the polynomial B-spline for the operator $\Op L = \Op D^{N_0}$, whose filter $d_{\Op L}$ is characterized by its $z$-transform $D_{\Op L}(z) = (1 - z^{-1})^{N_0}$. The corresponding B-spline $\beta_{\Op L}$ is supported in $[0, N_0]$.

A key feature of B-splines is that they are the $\Op L$-splines with the shortest support or, when finite support is impossible, with the fastest decay. 
Moreover, by \cite[Theorem 2.7]{amini2018universal}, for a valid B-spline $\beta_{\Op L}$ as specified by Definition~\ref{def:B-spline}, the sequence of functions $\{ \beta_{\Op L} (\cdot - k) \}_{k \in \Z}$ forms a Riesz basis in the sense of Definition~\ref{def:riesz}.



It is clear from~\eqref{eq:B-spline} that the innovation of the B-spline is a sum of Dirac impulses given by
\begin{align}
\label{eq:innovation_B-spline}
\Op L \{ \beta_{\Op L} \} = \sum_{k \in \Z} d_{\Op L}[k] \delta( \cdot - k).
\end{align}

\subsection{Choice of Basis Functions}
\label{sec:basis_functions}
We now present and discuss our choice for the basis functions $\varphi_{1, k}$ and $\varphi_{2, k}$.

\subsubsection{Sparse Component}
For the sparse component, we choose basis functions $\varphi_{1, k} = \beta_{\Op L_1}(\cdot - k)$ (defined in~\eqref{eq:B-spline}) for all $k \in \Z$.
With this choice,
\begin{align*}
V_1(\R) = \left\{ f = \sum_{k \in \Z} c_1[k] \varphi_{1, k}: c_1 \in V_1(\Z) \right\} \subset \Spc M_{\Op L_1, \phi_0}(\R)
\end{align*}
with the digital-filter space
\begin{align}
\label{eq:native_space_sequence1}
V_1(\Z) = \Bigg\{(c_1[k])_{k \in \Z} : \ &(d_{\Op L_1} \ast c_1) \in \ell_1(\Z) \text{ and }  \sum_{k \in \Z} c_1[k ]\V \phi_0(\varphi_{1, k}) = \V 0 \Bigg\},
\end{align}
is the largest possible native reconstruction space~\cite[Equation (22)]{debarre2019hybrid}.
The choice of the basis functions $\varphi_{1, k}$ is guided by the following considerations:
\begin{itemize}
\item they generate the space of uniform $\Op L_1$ splines. This conforms with Theorem~\ref{thm:continuous_RT}, which states that the component $\sopt_1$ is an $\Op L_1$-spline;
\item they enable exact computations in the continuous domain. In particular, we have that $\Vert \Op L_1 \{ \sum_{k \in \Z} c_1[k] \varphi_{1, k} \} \Vert_{\Spc M} = \Vert d_{\Op L_1}\ast c_1 \Vert_{\ell_1}$;
\item the Riesz-basis property of B-splines leads to a well-conditioned system matrix, which is paramount in numerical applications.
\end{itemize}
B-splines are the only functions that satisfy all these properties. Based on these criteria, B-splines are thus optimal.

\subsubsection{Smooth Component}

At first glance, the most natural choice for $\varphi_{2, k}$ is to select the basis functions suggested by~\eqref{eq:continuous_sol_s2} in Theorem~\ref{thm:continuous_RT}: $\tik_m$ for $1 \leq m \leq M$ and a basis of $\Spc N_{\Op L_2}$, which yield a finite number $M + N_{0,2}$ of basis functions. However, this approach runs into the following hitches:
\begin{itemize}
\item the basis functions $\tik_m$ are typically increasing at infinity, which contradicts the Riesz-basis requirement and leads to severely ill-conditioned optimization tasks~\cite{gupta2018continuous};
\item depending on the measurements operator $\V \nu$, $\tik_m$ may lack a closed-form expression.
\end{itemize}

We therefore focus on these criteria, in a spirit similar to~\cite{bohra2020computation}. The $\varphi_{2, k}$ are chosen to be regular shifts of a generating function $\varphi_2$, with $\varphi_{2, k} = \varphi_2( \cdot - k)$ such that $\left\{ \Op L_2 \{ \varphi_{2, k} \} \right\}_{k \in \Z}$ forms a Riesz basis in the sense of Definition~\ref{def:riesz}. Contrary to $\varphi_{1, k}$, these requirements allow for many choices of $\varphi_{2, k}$. In order to perform exact discretization, one then only needs to compute the following autocorrelation filter.
\begin{proposition}[Autocorrelation filter for the smooth component]
\label{prop:autocorrelation}
Let $\varphi_2$ be a generating function such that $\varphi_{2, k} = \varphi_2(\cdot - k)$ 
form a Riesz basis. Then, the following two items hold:
\begin{itemize}
\item the inner product $\langle \Op L_2 \{\varphi_{2, k} \}, \Op L_2 \{\varphi_{2, k'} \} \rangle_{L_2}$ only depends on the difference $(k - k')$. We can thus introduce the autocorrelation filter
\begin{align}
\label{eq:autocorrelation}
\autocorr [k] &= \langle \Op L_2 \{\varphi_{2, k} \}, \Op L_2 \{\varphi_{2, 0} \} \rangle_{L_2} \nonumber \\
&= \langle \Op L_2 \{\varphi_{2, k+k'} \}, \Op L_2 \{\varphi_{2, k'} \} \rangle_{L_2}
\end{align}
for any $k, k' \in \Z$;
\item the filter $\autocorr$ is positive semidefinite, with $\sum_{k, k' \in \Z} c[k] c[k'] \autocorr [k - k'] \geq 0$ for any finitely supported real digital filter $c$.
\end{itemize}

\end{proposition}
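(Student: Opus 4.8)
The plan is to prove both claims directly from the shift-invariance of the setup and the definition of the $L_2$ inner product. The key structural fact is that the operator $\Op L_2$ is LSI (linear shift-invariant), and the generating functions $\varphi_{2,k} = \varphi_2(\cdot - k)$ are integer shifts of a single function. Since $\Op L_2$ commutes with translations, we have $\Op L_2\{\varphi_{2,k}\} = \Op L_2\{\varphi_2(\cdot - k)\} = (\Op L_2\{\varphi_2\})(\cdot - k)$, so the functions $\Op L_2\{\varphi_{2,k}\}$ are themselves integer shifts of the single function $\Op L_2\{\varphi_2\}$.

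For the first item, I would compute the inner product using the change of variables $x \mapsto x + k'$ in the integral. Writing $g \triangleq \Op L_2\{\varphi_2\}$, the inner product $\langle \Op L_2\{\varphi_{2,k}\}, \Op L_2\{\varphi_{2,k'}\}\rangle_{L_2} = \int_{\R} g(x-k)\, g(x-k')\,\dis x$, and substituting $u = x - k'$ yields $\int_{\R} g(u - (k-k'))\, g(u)\, \dis u$, which manifestly depends only on the difference $(k-k')$. Setting $k' = 0$ recovers the definition of $\autocorr[k]$ in~\eqref{eq:autocorrelation}, and the equality of the two lines follows immediately. A preliminary point to verify is that $g = \Op L_2\{\varphi_2\} \in L_2(\R)$ so that these inner products are well-defined and finite; this is guaranteed because $\{\Op L_2\{\varphi_{2,k}\}\}_{k \in \Z}$ is assumed to form a Riesz basis, and Riesz-basis elements lie in $L_2(\R)$ by Definition~\ref{def:riesz}.

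For the second item, positive semidefiniteness, I would expand the finite double sum and exploit bilinearity of the inner product to recognize it as a single squared norm. Specifically, for any finitely supported real filter $c$,
\begin{align*}
\sum_{k, k' \in \Z} c[k] c[k'] \autocorr[k - k'] = \sum_{k, k'} c[k] c[k'] \langle \Op L_2\{\varphi_{2,k}\}, \Op L_2\{\varphi_{2,k'}\} \rangle_{L_2} = \left\Vert \sum_{k \in \Z} c[k]\, \Op L_2\{\varphi_{2,k}\} \right\Vert_{L_2}^2 \geq 0,
\end{align*}
where the middle equality uses the definition of $\autocorr$ from the first item together with linearity, and the interchange of summation with the inner product is justified by the finiteness of the support of $c$. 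The norm being nonnegative gives the claim.

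I do not anticipate a serious obstacle here, as both statements reduce to elementary manipulations once the LSI property of $\Op L_2$ is invoked. The only points requiring mild care are technical bookkeeping: confirming that $\Op L_2$ genuinely commutes with integer translations (which holds since $\Op L_2$ is an LSI operator with a well-defined frequency response $\widehat{L}_2$), and ensuring the relevant integrals converge so that the formal rearrangements are legitimate — both of which follow from the Riesz-basis hypothesis placing $\Op L_2\{\varphi_{2,k}\}$ in $L_2(\R)$. The main conceptual content is simply the recognition that an autocorrelation sequence of a square-integrable function is automatically a positive-semidefinite filter.
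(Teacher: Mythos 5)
Your proof is correct and takes essentially the same route as the paper: a change of variables in the integral defining the inner product for the first item, and the identification of the quadratic form $\sum_{k,k'} c[k]c[k']\autocorr[k-k']$ with the squared norm $\left\Vert \Op L_2\left\{\sum_{k} c[k]\varphi_{2,k}\right\}\right\Vert_{L_2}^2$ for the second. The details you add (commutation of $\Op L_2$ with integer shifts, and square-integrability of $\Op L_2\{\varphi_2\}$ guaranteed by the Riesz-basis hypothesis) are precisely the facts the paper's terser proof leaves implicit.
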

\begin{proof}
The first item is proved with a simple change of variable in the integral that defines the inner product. The second item is derived by observing that, for any $c_2$, we have
\begin{align}
\label{eq:positive_definite}
\nonumber
\left\Vert \Op L_2 \left\{ \sum_{k \in \Z} c_2[k] \varphi_{2, k} \right\} \right\Vert_{L_2}^2 &= \sum_{k, k' \in \Z} c_2[k]c_2[k'] \langle \Op L_2 \{ \varphi_{2, k} \}, \Op L_2 \{ \varphi_{2, k'} \} \rangle \\
&= \sum_{k, k' \in \Z} c_2[k]c_2[k'] \autocorr[ k - k'] \geq 0.
\end{align}
\end{proof}

\subsection{Formulation of the Discrete Problem}
The autocorrelation filter introduced in Proposition~\ref{prop:autocorrelation} enables us to discretize Problem~\eqref{eq:continuous_pb} in an exact way in the $V_i(\R)$ spaces.
\begin{proposition}[Riesz-basis Discretization]
\label{prop:riesz_discretization}
Let $\varphi_{i, k}$ be chosen as specified in Section~\ref{sec:basis_functions} for $i \in \{ 1, 2\}$, $k \in \Z$, and
\begin{align}
\label{eq:discrete_pb}
\Spc{S}_\dis = \left\{ \argmin_{(c_1, c_2) \in V_1(\Z) \times V_2(\Z)} \Spc{J}_\dis(c_1, c_2) \right\}.
\end{align}
The cost function is given by
\begin{align}
\label{eq:discrete_cost}
\nonumber
\Spc{J}_\dis(c_1, c_2) =& E \left( \sum_{k_1 \in \Z} c_1[k] \V \nu ( \varphi_{1, k} )  + \sum_{ k \in \Z} c_2[k] \V \nu ( \varphi_{2, k}), \V y \right) \\
& + \lambda_1 \Vert d_{\Op L_1} \ast c_1 \Vert_{\ell_1} + \lambda_2 \langle c_2, \autocorr \ast c_2 \rangle_{\ell_2},
\end{align}
where $d_{\Op L_1}$ is the finite-difference-like filter from Definition~\ref{def:B-spline}, $\autocorr$ is defined in Proposition~\ref{prop:autocorrelation}, and $\langle \cdot , \cdot \rangle_{\ell_2}$ is the inner product over $\ell_2(\Z)$.
Then, Problem~\eqref{eq:discrete_pb} is equivalent to a restriction of the search spaces $\MLone$ and $\HLtwo$ to the spaces $V_1(\R)$ and $V_2(\R)$ defined in~\eqref{eq:search_space}, respectively, so that
\begin{align}
\label{eq:restricted_continuous_pb}
 \Spc{S}_{\mathrm{res}} = \left\{ \argmin_{(s_1, s_2) \in V_1(\R) \times V_2(\R)}  \Spc{J}(s_1, s_2) \right\},
\end{align}
in the sense that there exists a bijective linear mapping $(c_1, c_2) \mapsto \left(\sum_{k \in \Z} c_1[k] \varphi_{1, k}, \sum_{k \in \Z} c_2[k] \varphi_{2, k} \right)$ from $\Spc S_\dis$ to $\Spc S_\mathrm{res}$.
\end{proposition}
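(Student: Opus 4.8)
The plan is to exhibit the claimed correspondence explicitly and to show that the synthesis map turns the continuous cost $\Spc J$ into the discrete cost $\Spc J_\dis$ term by term; the equivalence of the two argmin sets then follows by a formal argument. Concretely, I would write $T(c_1, c_2) = \left( \sum_{k\in\Z} c_1[k]\varphi_{1,k}, \ \sum_{k\in\Z} c_2[k]\varphi_{2,k} \right)$ for the synthesis map. By the very definition~\eqref{eq:search_space} of $V_i(\R)$ as the image of $V_i(\Z)$ under synthesis, $T$ maps $V_1(\Z)\times V_2(\Z)$ onto $V_1(\R)\times V_2(\R)$, so surjectivity is automatic. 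The two facts left to establish are that $T$ is injective and that $\Spc J\circ T = \Spc J_\dis$ on all of $V_1(\Z)\times V_2(\Z)$.

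For the cost equivalence I would treat the three terms separately. For the data-fidelity term, linearity of $\V\nu$ gives $\V\nu(s_1 + s_2) = \V\nu(s_1) + \V\nu(s_2)$, and the weak$^\ast$-continuity of each $\nu_m$ over $\MLone$ and $\HLtwo$ assumed in Theorem~\ref{thm:continuous_RT} lets me interchange the functional with the infinite sum (whose weak$^\ast$ convergence is built into the membership $s_i\in V_i(\R)$), yielding $\V\nu(s_i) = \sum_{k} c_i[k]\V\nu(\varphi_{i,k})$; this reproduces exactly the argument of $E$ in~\eqref{eq:discrete_cost}. For the gTV term, I apply the shift-invariant operator $\Op L_1$ to $s_1 = \sum_k c_1[k]\beta_{\Op L_1}(\cdot-k)$ and use the innovation formula~\eqref{eq:innovation_B-spline} to get $\Op L_1\{s_1\} = \sum_n (d_{\Op L_1}\ast c_1)[n]\,\delta(\cdot-n)$; since the $\Spc M$-norm of a weighted sum of shifted Diracs equals the $\ell_1$ norm of its weights, this gives $\|\Op L_1\{s_1\}\|_\Spc M = \|d_{\Op L_1}\ast c_1\|_{\ell_1}$. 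For the gTikhonov term, Proposition~\ref{prop:autocorrelation} directly gives $\|\Op L_2\{s_2\}\|_{L_2}^2 = \sum_{k,k'} c_2[k]c_2[k']\,\autocorr[k-k'] = \langle c_2, \autocorr\ast c_2\rangle_{\ell_2}$. Summing the three identities yields $\Spc J(T(c_1,c_2)) = \Spc J_\dis(c_1,c_2)$.

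Injectivity of $T$ is where the genuine work lies, and I expect it to be the main obstacle. Suppose $T(c_1,c_2) = (0,0)$. For the smooth part, applying $\Op L_2$ to $\sum_k c_2[k]\varphi_{2,k} = 0$ and invoking the lower Riesz bound of Definition~\ref{def:riesz} for $\{\Op L_2\{\varphi_{2,k}\}\}$ forces the $\Op L_2$-image coefficients to vanish; I then need the choice of $V_2(\Z)$ together with the linear independence of the integer shifts of $\varphi_2$ to rule out a residual nonzero element of $\Spc N_{\Op L_2}$. For the sparse part, the same innovation computation shows $d_{\Op L_1}\ast c_1 = 0$, i.e.\ $s_1\in\Spc N_{\Op L_1}$, whence linear independence of the integer shifts of the B-spline $\beta_{\Op L_1}$ forces $c_1 = 0$. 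The delicate point is precisely the handling of the finite-dimensional null-space directions and of the boundary-condition constraint $\sum_k c_1[k]\V\phi_0(\varphi_{1,k}) = \V 0$ encoded in $V_1(\Z)$, since the coefficient spaces $V_i(\Z)$ are strictly larger than $\ell_2$ and the bare Riesz bound only controls $\ell_2$ sequences.

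Finally, I would conclude formally. Because $T$ is a bijective linear map from $V_1(\Z)\times V_2(\Z)$ onto $V_1(\R)\times V_2(\R)$ and $\Spc J\circ T = \Spc J_\dis$, a pair $(c_1,c_2)$ minimizes $\Spc J_\dis$ over $V_1(\Z)\times V_2(\Z)$ if and only if $T(c_1,c_2)$ minimizes $\Spc J$ over $V_1(\R)\times V_2(\R)$. Hence $T$ restricts to a bijective linear map from $\Spc{S}_\dis$ onto $\Spc{S}_{\mathrm{res}}$, which is exactly the asserted equivalence.
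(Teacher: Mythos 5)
Your proposal is correct and follows essentially the same route as the paper's proof: plug the synthesis expansions into $\Spc J$, establish the three term-by-term identities (linearity of $\V \nu$ for the data term, the innovation formula~\eqref{eq:innovation_B-spline} for the gTV term, and the autocorrelation identity of Proposition~\ref{prop:autocorrelation} for the gTikhonov term), then conclude that the synthesis map carries $\Spc S_\dis$ bijectively onto $\Spc S_{\mathrm{res}}$. The only difference is one of detail: where you rightly flag injectivity beyond $\ell_2$ coefficients as the delicate point (since $V_i(\Z)$ contains non-$\ell_2$ sequences) and invoke linear independence of the integer shifts of the B-splines, the paper simply attributes bijectivity to the Riesz-basis property of $\{\varphi_{i,k}\}_{k \in \Z}$, so your treatment is, if anything, more careful than the published one.
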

\begin{proof}
By plugging the expansions $s_i = \sum_{k \in \Z} c_i[k] \varphi_{i, k}$ into the cost function $\Spc J$, using the linearity of $\V \nu$, we get the data-fidelity term of~\eqref{eq:discrete_cost}. Using~\eqref{eq:innovation_B-spline}, we readily deduce that $\Vert \Op L_1 \{ \sum_{k \in \Z} c_1[k] \varphi_{1, k} \} \Vert_{\Spc M} = \Vert d_{\Op L_1} \ast c_1 \Vert_{\ell_1}$ \cite[Equation (25)]{debarre2019b}. As for the second regularization term, we observe that

\begin{align}
\langle c_2, c_2 \ast \autocorr \rangle_{\ell_2}
&=  \sum_{k, k' \in \Z} c_2[k] c_2[k'] \autocorr [k - k'] \nonumber \\
&=  \left\Vert \Op L_2 \left\{ \sum_{k \in \Z} c_2[k] \varphi_{2, k} \right\} \right\Vert_{L_2}^2,
\end{align}
using~\eqref{eq:positive_definite} for the last step. This proves the equivalence between Problems~\eqref{eq:restricted_continuous_pb} and \eqref{eq:discrete_pb}, up to the specified mapping which is indeed a bijective linear mapping due to the Riesz-basis properties of $\{ \varphi_{1, k} \}_{k \in \Z}$ and $\{ \varphi_{2, k} \}_{k \in \Z}$.
\end{proof}

\section{Practical Implementation}
We now discuss how to solve our discretized problem~\eqref{eq:discrete_pb} in practice, which involves recasting it as a finite-dimensional problem.

\subsection{Finite Domain Assumptions }
\label{sec:assumptions_finite_pb}
To solve problem~\eqref{eq:discrete_pb} numerically in an exact way, we must make assumptions that will enable us to restrict the problem to a finite interval of interest.
\begin{enumerate}
\item The operators $\Op L_i$ for $i \in \{1, 2\}$ admit a B-spline with finite support, which implies that the filters $d_{\Op L_i}$ (introduced in Definition~\eqref{eq:B-spline}) and $\autocorr$ 
(Proposition~\ref{prop:autocorrelation_B-splines}) have finite support. Without loss of generality, the support of $d_{\Op L_i}$ is chosen to be $[0 \ldots D_i-1 ]$ (of length $D_i > 0$), which leads to causal B-splines $\beta_{\Op L_i}$.
\item The measurement functionals $\nu_m$ are supported in an interval $I_T = [0, T]$, where $T \in \N$.
\end{enumerate}

The first item is fulfilled for common one-dimensional operators $\Op L_1$ such as ordinary differential operators~\cite{unser2005cardinal} or rational operators~\cite{unser2005cardinalII}.

The second assumption is natural and is often fulfilled in practice, for instance in imaging with a finite field of view. The support length $T$ then roughly corresponds to the number of grid points in the interval of interest. Note that, for simplicity, we only consider integer grids. However, the finesse of the grid can be tuned at will by adjusting $T$ and rescaling the problem over the interval of interest.

\subsection{Choice of Basis Functions $\varphi_{2, k}$}
\label{sec:choice_phi2}

For our implementation, we make a specific choice of basis functions $\varphi_{2, k}$ for the second component, since many different choices satisfy the requirements of Section~\ref{sec:basis_functions}. We choose the $\Op L_2 ^\ast \Op L_2$ B-spline basis $\varphi_2 = \beta_{\Op L_2 ^\ast \Op L_2}$ and $\varphi_{2, k} = \varphi_2( \cdot - k)$, where $\Op L_2 ^\ast$ denotes the adjoint operator of $\Op L_2$. In addition to the items discussed in Section~\ref{sec:basis_functions}, this choice has the following advantages:
\begin{itemize}
\item the generator $\varphi_2$ has a simple explicit expression that does not depend on the measurement operator $\V \nu$;
\item the autocorrelation filter $\autocorr$ also has a simple expression, as will be shown in Proposition~\ref{prop:autocorrelation_B-splines};
\item in the special case of the sampling operator $\nu_m = \delta( \cdot - x_m)$, where the $x_m$ are the sampling locations, this choice conforms with~\eqref{eq:continuous_sol_s2} in Theorem~\ref{thm:continuous_RT} since $\sopt_2$ is then an $\Op L_2 ^\ast \Op L_2$-spline. Note, however, that we do not exploit the knowledge that $\sopt_2$ has knots at the sampling locations $x_m$.
\end{itemize}

\subsection{Formulation of the Finite-Dimensional Problem}

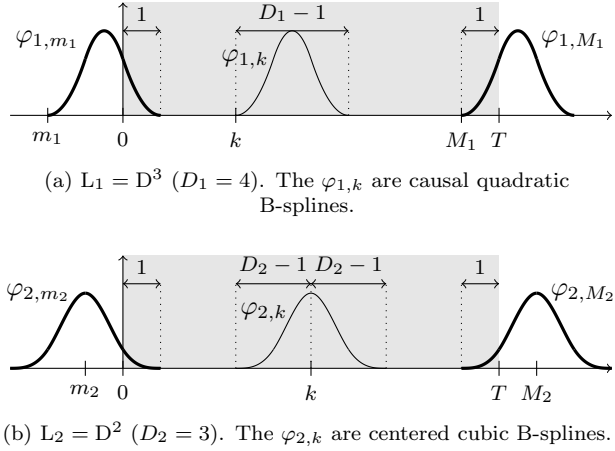
\begin{figure}
\centering
\subfloat[$\Op L_1 = \Op D^3$ ($D_1 = 4$). The $\varphi_{1,k}$ are causal quadratic B-splines.]{
\begin{tikzpicture}[scale=0.5]
  \path [fill=gray!20] (0,0) rectangle (10,3);
  \draw[->] (-3, 0) -- (13, 0);
  \draw[->] (0, 0) -- (0, 3);
   \draw (0, 0.2) -- (0, -0.2) node[below] {\footnotesize$0$};
  \draw (10, 0.2) -- (10, -0.2) node[below] {\footnotesize$T$};
 \draw[<->] (0, 2.25) -- (1, 2.25) node[above, midway] {\footnotesize$1$};
 \draw[dotted] (1, 0) -- (1, 2.25);
 \draw[<->] (9, 2.25) -- (10, 2.25) node[above, midway] {\footnotesize$1$};
  \draw[dotted] (9, 0) -- (9, 2.25);
  \draw[domain=0:1, smooth, variable=\x, very thick] plot ({\x-2}, {3*(\x*\x)/2)});
  \draw[domain=1:2, smooth, variable=\x, very thick] plot ({\x-2}, {3*(-\x*\x+3*\x-3/2)});
  \draw[domain=2:3, smooth, variable=\x, very thick] plot ({\x-2}, {3*(3-\x)*(3-\x)/2)});
  \draw (-2, 0.2) -- (-2, -0.2) node[below] {\footnotesize$m_1$};
\node at (-2,2)  {$\varphi_{1,m_1}$};
 \draw[domain=0:1, smooth, variable=\x] plot ({\x+3}, {3*(\x*\x)/2)});
  \draw[domain=1:2, smooth, variable=\x] plot ({\x+3}, {3*(-\x*\x+3*\x-3/2)});
  \draw[domain=2:3, smooth, variable=\x] plot ({\x+3}, {3*(3-\x)*(3-\x)/2)});
  \node at (3.25,1.5)  {$\footnotesize \varphi_{1,k}$};
  \draw (3, 0.2) -- (3, -0.2) node[below] {\footnotesize$k$};
\draw[<->] (3, 2.25) -- (6, 2.25) node[above, midway] {\footnotesize$D_1-1$};
 \draw[dotted] (3, 0) -- (3, 2.25);
  \draw[dotted] (6, 0) -- (6, 2.25);
  \draw[domain=0:1, smooth, variable=\x, very thick] plot ({\x+9}, {3*(\x*\x)/2)});
  \draw[domain=1:2, smooth, variable=\x, very thick] plot ({\x+9}, {3*(-\x*\x+3*\x-3/2)});
  \draw[domain=2:3, smooth, variable=\x, very thick] plot ({\x+9}, {3*(3-\x)*(3-\x)/2});
     \draw (9, 0.2) -- (9, -0.2) node[below] {\footnotesize $M_1$};
  \node at (12,2)  {$\varphi_{1,M_1}$};
\end{tikzpicture}} \\
\subfloat[$\Op L_2 = \Op D^2$ ($D_2 = 3$). The $\varphi_{2,k}$ are centered cubic B-splines.]{
\begin{tikzpicture}[scale=0.5]
\path [fill=gray!20] (0,0) rectangle (10,3);
  \draw[->] (-3, 0) -- (13, 0);
  \draw[->] (0, 0) -- (0, 3);
   \draw[<->] (0, 2.25) -- (1, 2.25) node[above, midway] {\footnotesize$1$};
 \draw[dotted] (1, 0) -- (1, 2.25);
 \draw[<->] (9, 2.25) -- (10, 2.25) node[above, midway] {\footnotesize$1$};
  \draw[dotted] (9, 0) -- (9, 2.25);
   \draw (0, 0.2) -- (0, -0.2) node[below] {\footnotesize $0$};
  \draw (10, 0.2) -- (10, -0.2) node[below] {\footnotesize $T$};
  \draw[domain=0:1, smooth, variable=\x, very thick] plot ({\x-3}, {3*(\x*\x*\x)/6)});
  \draw[domain=1:2, smooth, variable=\x, very thick] plot ({\x-3}, {3*(-\x*\x*\x/2+2*\x*\x-2*\x+2/3)});
 \draw[domain=2:3, smooth, variable=\x, very thick] plot ({\x-3}, {3*(\x*\x*\x/2-4*\x*\x+10*\x-22/3)});
   \draw[domain=3:4, smooth, variable=\x, very thick] plot ({\x-3}, {3*(4-\x)*(4-\x)*(4-\x)/6)});
  \draw (-1, 0.2) -- (-1, -0.2) node[below] {\footnotesize$m_2$};
\node at (-2.25,2)  {$\varphi_{2,m_2}$};

  \draw[domain=0:1, smooth, variable=\x] plot ({\x+3}, {3*(\x*\x*\x)/6)});
  \draw[domain=1:2, smooth, variable=\x] plot ({\x+3}, {3*(-\x*\x*\x/2+2*\x*\x-2*\x+2/3)});
 \draw[domain=2:3, smooth, variable=\x] plot ({\x+3}, {3*(\x*\x*\x/2-4*\x*\x+10*\x-22/3)});
   \draw[domain=3:4, smooth, variable=\x] plot ({\x+3}, {3*(4-\x)*(4-\x)*(4-\x)/6)});
     \draw (5, 0.2) -- (5, -0.2) node[below] {\footnotesize$k$};
    \draw[<->] (3, 2.25) -- (5, 2.25) node[above, midway] {\footnotesize$D_2-1$};
     \draw[<->] (5, 2.25) -- (7, 2.25) node[above, midway] {\footnotesize$D_2-1$};
 \draw[dotted] (3, 0) -- (3, 2.25);
  \draw[dotted] (5, 0) -- (5, 2.25);
  \draw[dotted] (7, 0) -- (7, 2.25);
  \node at (3.75,1.5)  {$\footnotesize \varphi_{2,k}$};

  \draw[domain=0:1, smooth, variable=\x, very thick] plot ({\x+9}, {3*(\x*\x*\x)/6)});
  \draw[domain=1:2, smooth, variable=\x, very thick] plot ({\x+9}, {3*(-\x*\x*\x/2+2*\x*\x-2*\x+2/3)});
 \draw[domain=2:3, smooth, variable=\x, very thick] plot ({\x+9}, {3*(\x*\x*\x/2-4*\x*\x+10*\x-22/3)});
   \draw[domain=3:4, smooth, variable=\x, very thick] plot ({\x+9}, {3*(4-\x)*(4-\x)*(4-\x)/6)});
   \draw (11, 0.2) -- (11, -0.2) node[below] {\footnotesize $M_2$};
  \node at (12.25,2)  {$\varphi_{2,M_2}$};
\end{tikzpicture}}
\caption{Examples of boundary basis functions $\varphi_{i,m_i}$ and $\varphi_{i,M_i}$ for $i \in \{1, 2\}$.}
\label{fig:extreme_basis_functions}
\end{figure}

Our choice of basis functions 
together with the assumptions in Section~\ref{sec:assumptions_finite_pb} enable us to restrict Problem~\eqref{eq:discrete_pb} to the interval of interest $I_T$. More precisely, we introduce the indices $m_i, M_i \in \Z$ for $i\in \{1, 2 \}$; the range $[m_i \ldots M_i ]$ corresponds to the set of indices $k$ for which $\mathrm{Supp}(\varphi_{i, k}) \cap I_T \neq \emptyset$, so that the basis function $\varphi_{i, k}$ affects the measurements. Hence, the number of active basis functions (\ie the number of spline coefficients to be optimized) is $N_i = (M_i - m_i - 1)$.  It can easily be verified that we have $m_1 = (-D_1 + 2)$, $M_1 = (T - 1)$, $m_2 = (- D_2 + 2)$, and $M_2 = (T + D_2 - 2)$. See Figure~\ref{fig:extreme_basis_functions} for an illustrative example.

 Finally, we introduce the native digital-filter space
 \begin{align}
\label{eq:native_space_sequence2}
V_2(\Z) =  \Big\{ (c_2[k])_{k \in \Z}: \ \mathrm{Supp}( d_{\Op L_2} \ast c_2) \subset [1 \ldots M_2] \Big\} ,
\end{align}
which is a valid choice because $V_2(\R) \subset \HLtwo$. Indeed, we can verify that, for any $c_2 \in V_2(\Z)$, the function $s_2 = \sum_{k \in \Z} c_2[k] \varphi_{2, k}$ satisfies $\Vert \Op L_2 \{s_2\} \Vert_{L_2}^2 = \Vert \sqrta \ast c_2 \Vert_{\ell_2}^2 = \Vert \sqrtb  \ast (d_{\Op L_2} \ast c_2) \Vert_{\ell_2}^2 < +\infty$, which proves that $s_2 \in \HLtwo$. This is due to the finite support of both $(d_{\Op L_2} \ast c_2)$ and $b^{1/2}$, where the filter $b^{1/2}$ and the decomposition $\sqrta = \sqrtb  \ast d_{\Op L_2}$ are introduced in Proposition~\ref{prop:autocorrelation_B-splines} in Appendix~\ref{app:L_2}.

\begin{remark}
Contrary to $V_1(\Z)$ defined in \eqref{eq:native_space_sequence1}, our choice of $V_2(\Z)$ in \eqref{eq:native_space_sequence2} is not the largest valid space: there exist larger vector spaces such that $V_2(\R) \subset \HLtwo$. However, the support restriction implies that for any $s_2 = \sum_{k \in \Z} c_2[k] \beta_{\Op L_2^\ast \Op L_2}(\cdot - k) \in V_2(\R)$, the function $\Op L_2 \{s_2 \} = \sum_{k \in \Z} (d_{\Op L_2} \ast c_2)[k] \beta_{\Op L_2}^\vee(\cdot - k)$ has a finite support. This is a desirable property both for simplicity of implementation and because it conforms with Theorem~\ref{thm:continuous_RT}, since $s_2^\ast$ in \eqref{eq:continuous_sol_s2} also satisfies this property. Our specific choice of support for $ (d_{\Op L_2} \ast c_2)$ is guided by boundary considerations and will be justified in the proof of Proposition~\ref{prop:finite_pb}. 
\end{remark}

The restriction to a finite number of active spline coefficients leads to finite-dimensional system and regularization matrices. The system matrices are of the form
\begin{align}
\label{eq:system_matrix}
\M{H}_i = \begin{bmatrix}  \V{h}_{m_i} & \cdots & \V{h}_{M_i} \end{bmatrix} \in \R^{M \times N_i}: \quad \V{h}_k = \V\nu(\varphi_{i, k}).
\end{align}
The regularization matrix for the sparse component, denoted by $\M L_1 \in \R^{(N_1 - D_1 + 1) \times N_1}$, is of the form
\begin{align}
\label{eq:regularization_matrix1}
\M{L}_1 = \begin{pmatrix} 
d_{\Op L_1}[D_1-1] & \cdots & d_{\Op L_1}[0] & 0 & \cdots & 0 \\
0 & \ddots & & \ddots & \ddots &\vdots  \\
\vdots & \ddots & \ddots & & \ddots & 0 \\
0 & \cdots & 0 & d_{\Op L_1}[D_1-1]& \cdots & d_{\Op L_1}[0]
\end{pmatrix}.
\end{align}
The second component requires a careful handling of the boundaries in order to achieve exact discretization. This leads to a more complicated expression for the associated regularization matrix, which is given in \eqref{eq:regularization_matrix2} in Appendix~\ref{app:L_2}. 

Finally, we introduce the matrix $\M A \in \R^{N_{0}\times N_1}$ associated to the boundary condition functionals $\V \phi_0$. Our choice of boundary condition functionals $\V \phi_0$ is presented in Appendix~\ref{app:boundary_conditions}. With this choice, the constraint $\V \phi_0(\sum_{k \in \Z} c_1[k] \varphi_{1, k}) = \V 0$ leads to $N_0$ linear constraints on the coefficients $\V c_1 = (c_1[m_1], \ldots , c_1[M_1])$, 
which can be written in matrix form as $\M A \V c_1 = \V 0$. In common cases, these constraints simply lead to the $N_0$ first coefficients of $\V c_1$ to be set to zero, which thus reduces the dimension of the optimization problem.

These matrices enable an exact discretization of Problem~\eqref{eq:discrete_pb}, as shown in Proposition~\ref{prop:finite_pb}, the proof of which being given in Appendix~\ref{app:finite_problem}.

\begin{proposition}[Recasting as a finite problem]
\label{prop:finite_pb}
 Let $\varphi_{1, k} = \beta_{\Op L_1}(\cdot - k)$, $\varphi_{2, k} = \beta_{\Op L_2^\ast \Op L_2}(\cdot - k)$, and let the assumptions 
 in Section~\ref{sec:assumptions_finite_pb} be satisfied.
Then, Problem~\eqref{eq:discrete_pb} is equivalent to the optimization problem
\begin{align}
\label{eq:finite_pb}
 S = \left\{ \argmin_{\substack{(\V c_1, \V c_2) \in \R^{N_1} \times \R^{N_2} \\ \M A \V c_1 = \V 0 }}  J(\V c_1, \V c_2) \right\},
\end{align}
where the cost function is given by 
\begin{align}
J(\V c_1, \V c_2) =& E( \M H_1 \V c_1 + \M H_2 \V c_2, \V y) \\
&+ \lambda_1 \Vert \M L_1 \V c_1 \Vert_1 + \lambda_2 \Vert \M L_2 \V c_2 \Vert_2^2.
\end{align}
The matrices $\M H_i$ and $\M L_i$ for $i \in \{1, 2 \}$ are defined in \eqref{eq:system_matrix}, \eqref{eq:regularization_matrix1}, and \eqref{eq:regularization_matrix2}. This equivalence holds in the sense that there exists a bijective linear mapping from $\Spc{S}_\mathrm{d}$ to $S$
\begin{align}
(c_1, c_2) \mapsto \Big( (c_{1}[m_1], \ldots , c_{1}[M_1]), (c_{2}[m_2], \ldots , c_{2}[M_2] ) \Big)
\end{align}
between their solution sets.
\end{proposition}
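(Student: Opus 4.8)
The plan is to verify that the proposed active-coefficient map preserves the cost functional term by term and carries the constraints of $V_1(\Z)\times V_2(\Z)$ onto the finite constraint $\M A\V c_1=\V 0$; the equivalence of the solution sets then follows because the map is a cost-preserving linear bijection. First I would dispose of the data-fidelity term. By Assumption~2 each $\nu_m$ is supported in $I_T$, and by Assumption~1 each B-spline $\varphi_{i,k}$ has finite support; hence, by the very definition of $m_i$ and $M_i$, one has $\mathrm{Supp}(\varphi_{i,k})\cap I_T=\emptyset$ and thus $\V\nu(\varphi_{i,k})=\V 0$ whenever $k\notin[m_i\ldots M_i]$. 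The infinite sums in the data term of $\Spc J_\dis$ therefore collapse to finite sums over the active indices, which by \eqref{eq:system_matrix} equal $\M H_1\V c_1+\M H_2\V c_2$. This already shows that the data term depends only on the active coefficients $\V c_i=(c_i[m_i],\ldots,c_i[M_i])$ and matches the one in $J$.

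Next I would reduce the gTikhonov term. Invoking Proposition~\ref{prop:autocorrelation_B-splines}, I factor the generator filter as $\sqrta=\sqrtb\ast d_{\Op L_2}$ with $\sqrtb$ finitely supported, so that $\langle c_2,\autocorr\ast c_2\rangle_{\ell_2}=\Vert\sqrta\ast c_2\Vert_{\ell_2}^2=\Vert\sqrtb\ast(d_{\Op L_2}\ast c_2)\Vert_{\ell_2}^2$. The defining constraint of $V_2(\Z)$ in \eqref{eq:native_space_sequence2}, namely $\mathrm{Supp}(d_{\Op L_2}\ast c_2)\subset[1\ldots M_2]$, together with the finite support of $\sqrtb$, turns this into a finite quadratic form in the active coefficients; collecting its finitely many taps into the matrix $\M L_2$ of \eqref{eq:regularization_matrix2} yields $\Vert\M L_2\V c_2\Vert_2^2$. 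I would stress here that the specific support window $[1\ldots M_2]$ is precisely what is needed for the boundary rows of $\M L_2$ to be consistent, which is the reason it was imposed in \eqref{eq:native_space_sequence2}.

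The delicate part is the gTV term, since $V_1(\Z)$ only requires $d_{\Op L_1}\ast c_1\in\ell_1(\Z)$ and the boundary condition, so a priori the innovation may carry knots outside $I_T$. My plan is to show these exterior knots are both removable and suboptimal. For $k<m_1$ or $k>M_1$ the generator $\varphi_{1,k}$ is supported outside $I_T$, so altering these tail coefficients changes neither the measurements (by the first paragraph) nor, by our choice of $\V\phi_0$ localized near $I_T$, the boundary functionals. Because $d_{\Op L_1}$ has nonzero leading and trailing taps, the equations $(d_{\Op L_1}\ast c_1)[n]=0$ for $n\le 0$ and for $n\ge T$ can be solved recursively for the tail coefficients, extending any finite active vector to a unique sequence of $V_1(\Z)$ whose innovation is supported in $[1\ldots T-1]$. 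Zeroing the exterior innovation strictly lowers the $\ell_1$ cost whenever it was nonzero, so every minimizer of \eqref{eq:discrete_pb} has its innovation confined to $[1\ldots T-1]$; for such sequences $\Vert d_{\Op L_1}\ast c_1\Vert_{\ell_1}=\sum_{n=1}^{T-1}\lvert(d_{\Op L_1}\ast c_1)[n]\rvert=\Vert\M L_1\V c_1\Vert_1$ by \eqref{eq:regularization_matrix1}, while the surviving boundary condition becomes the finite linear system $\M A\V c_1=\V 0$.

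Finally I would assemble the bijection. The three reductions show that on sequences with interior-supported innovation one has $\Spc J_\dis(c_1,c_2)=J(\V c_1,\V c_2)$, that the active-coefficient map sends $V_1(\Z)\times V_2(\Z)$ into the feasible set cut out by $\M A\V c_1=\V 0$, and that the canonical tail extension of the third paragraph (together with the forced support of $d_{\Op L_2}\ast c_2$ in $V_2(\Z)$) furnishes its linear inverse. Hence the map is a cost-preserving linear bijection and restricts to one between the solution sets $\Spc S_\dis$ and $S$. I expect the main obstacle to be the tail argument for the sparse component: precisely, checking that discarding exterior knots is compatible with the boundary functionals $\V\phi_0$ and that the residual constraint is exactly $\M A\V c_1=\V 0$, which is where the specific construction of $\V\phi_0$ in the appendix does the work.
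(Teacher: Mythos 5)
Your proposal is correct, and its overall route is the same as the paper's: collapse the data term to the active coefficients via the definition of $m_i, M_i$, reduce the gTikhonov term using the factorization $\autocorr = \sqrta \ast \sqrta^\vee$ with $\sqrta = \sqrtb \ast d_{\Op L_2}$ together with the support constraint in \eqref{eq:native_space_sequence2}, and build the bijection from unique recursive tail extensions. The one genuine difference is your handling of the sparse component: the paper disposes of it in one line by citing \cite[Proposition 2]{debarre2019b} for the identity $\Vert \Op L_1\{s_1\}\Vert_{\Spc M} = \Vert \M L_1 \V c_1 \Vert_1$ and for the uniqueness of the extension, whereas you prove it inline — with the active coefficients $\V c_1$ fixed, the exterior innovation can be annihilated recursively (using the nonzero end taps of $d_{\Op L_1}$) without changing the measurements or the boundary functionals $\V \phi_0$, and doing so strictly decreases the $\ell_1$ cost, so every minimizer has innovation confined to $[1 \ldots T-1]$. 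This makes the argument self-contained and correctly isolates the asymmetry that the paper's text leaves implicit: $V_1(\Z)$ in \eqref{eq:native_space_sequence1} tolerates exterior innovation (so suboptimality must be argued), while $V_2(\Z)$ in \eqref{eq:native_space_sequence2} forbids it by definition. Conversely, you are sketchier than the paper in two places, though neither is a logical gap: (i) the boundary rows of $\M L_2$, which the paper derives explicitly by rewriting $(\sqrta \ast c_2)[n]$ near the edges with the restricted filters $\sqrtb\vert_J$, exploiting that $(d_{\Op L_2} \ast c_2)[n] = 0$ outside $[1 \ldots M_2]$ to obtain the blocks $\M M^\pm$ of \eqref{eq:regularization_matrix2}; and (ii) the inverse map for the smooth component, which needs the same recursion you spelled out for $c_1$ (set $c_2[M_2+1], c_2[M_2+2], \ldots$ and the left tail to nullify $d_{\Op L_2} \ast c_2$ outside $[1 \ldots M_2]$), and which the paper writes out in full.
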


The combination of Propositions~\ref{prop:riesz_discretization} and \ref{prop:finite_pb} allows us to solve the continuous-domain infinite-dimensional problem~\eqref{eq:restricted_continuous_pb} by finding a solution $( \opt{\V c}_1, \opt{ \V c}_2) \in S$ of the finite-dimensional problem~\eqref{eq:finite_pb}. We obtain the corresponding solution of~\eqref{eq:restricted_continuous_pb} by extending these vectors to digital filters $(\opt{c}_1, \opt{c}_2) \in \Spc S_\mathrm{d}$ (this extension is unique as specified by Proposition~\ref{prop:finite_pb}), which yields the continuous-domain reconstruction $\opt{s} = \opt{s}_1 + \opt{s}_2$, where $\opt{s}_i = \sum_{k \in \Z} \opt{c}_d[k] \varphi_{i, k}$.

\subsection{Sparsification Step}
\label{sec:sparse_sol}

Although problem~\eqref{eq:finite_pb} can be solved using standard solvers such as the alternating-direction method of multipliers (ADMM), there is no guarantee that such solvers will yield a solution of the desired form specified by Theorem~\ref{thm:continuous_RT}, \ie $\opt{s}_1$ is an $\Op L_1$-spline with fewer than $(M - N_{0,1})$ knots, and $\opt{s}_2$ is a sum of $M$ kernel functions and a null space element. This is a particularly relevant observation for the first component since, at fixed second component $\opt{s}_2$, only extreme-point solutions $\opt{s}_1$ of Problem~\eqref{eq:continuous_pb} take the prescribed form \cite{unser2017splines}. This problem can be alleviated by computing a solution $( \opt{\V c}_1, \opt{\V c}_2)$ to Problem~\eqref{eq:finite_pb}, and then finding an extreme point of the solution set $\V c_1^{\text{extr}} \in \argmin_{\V c_1 \in \R^{N_1}}  J(\V c_1, \opt{\V c}_2)$, which leads to a solution $(\V c_1^{\text{extr}}, \opt{\V c}_2)$ of the prescribed form. This is achieved by recasting the problem as a linear program and using the simplex algorithm \cite{dantzig1955generalized} to reach an extreme-point solution \cite[Theorem 7]{gupta2018continuous}.



\section{Experimental Validation}
\label{sec:applications}

We now validate our reconstruction algorithm in a simulated setting.

\subsection{Experimental Setting}
\subsubsection{Grid Size}
We rescale the problem by a factor $T$ so that the interval of interest $I_T$ is mapped into $[0, 1]$. We tune the finesse of the grid (and the dimension of the optimization task) by varying $T$, which amounts to varying the grid size $h = 1/T$ in the rescaled problem.

\subsubsection{Ground Truth}
\label{sec:ground_truth}
 We generate a ground-truth signal $s^{\mathrm{GT}} = s_1^{\mathrm{GT}} + s_2^{\mathrm{GT}}$. The sparse component  $s_1^{\mathrm{GT}}$ is chosen to be an $\Op L_1$-spline of the form \eqref{eq:spline_def_green} with few jumps, for which gTV is an adequate choice of regularization, as demonstrated by \eqref{eq:continuous_sol_s1} in our representer theorem. For the smooth component $s_2^{\mathrm{GT}}$, we generate a realization of a solution $s_2$ of the stochastic differential equation $\Op L_2 s_2 = w$, where $w$ is a Gaussian white noise with standard deviation $\sigma_2$ by following the method of \cite{dadi2020generating}. The operator $\Op L_2$ then acts as a whitening operator for the stochastic process $s_2$. The reason for this choice is the connection between the minimum mean-square estimation of such stochastic processes and the solutions to variational problems with gTikhonov regularization $\Vert \Op L_2 s_2 \Vert_{L_2}^2$ \cite{wahba1990spline, unser2005generalized, badoual2018periodic}.
 
 \subsubsection{Forward Operator}

Our forward model is the Fourier-domain cosine sampling operator of the form $\nu_1(s) = \int_{0}^1 s(t) \mathrm{d}t$ (DC term) and
\begin{align}
\label{eq:forward_cos}
\nu_m(s) = \int_{0}^1 \cos(\omega_m t + \theta_m) s(t) \mathrm{d}t
\end{align}
for $2 \leq m \leq M$, where the sampling pulsations $\omega_m$ are chosen at random within the interval $(0, \omega_{\max}]$, and the phases $\theta_m$ are chosen at random within the interval $[0, 2 \pi)$. Notice that $\nu_m$ is a Fourier-domain measurement of the restriction of $s$ to the interval of interest $[0,1]$, in conformity with the finite-domain assumption in Section~\ref{sec:assumptions_finite_pb}.

For the data-fidelity term, we use the standard quadratic error $E(\V x, \V y) = \frac12 \Vert \V x - \V y \Vert_2^2$ .
 
  \subsection{Comparison with Non-Composite Models}
 
 
 We now validate our new sparse-plus-smooth model against more standard non-composite models. More precisely, for $i \in \{ 1, 2\}$ we solve the regularized problems
 \begin{align}
     \argmin_{f\in \Spc X_i} \left\{ E(\V \nu (f), \V y) + \lambda \Spc R_i(f) \right\}
 \end{align}
with regularizers $R_1(f) = \Vert \Op L_1 \{ f\} \Vert_{\Spc M}$ (sparse model with native space $\Spc X_1 = \Spc M_{\Op L_1}(\R)$) and $R_2(f) = \Vert \Op L_2 \{ f \} \Vert_{L_2}$ (smooth model with native space $\Spc X_2 = \Spc H_{\Op L_2}(\R)$). We discretize these problems using the reconstruction spaces $V_i(\R)$ described in this paper (without restricting $V_1(\R)$ with the boundary conditions $\V \phi_0$). The sparse model thus amounts to an $\ell_1$-regularized discrete problem which we solve using ADMM, while the smooth model has a closed-form solution that can be obtained by inverting a matrix.

For this comparison, we choose regularization operators $\Op L_1 = \Op D$ and $\Op L_2 = \Op D^2$ with $M=50$ Fourier-domain measurements (cosine sampling with $\omega_{\max} = 100$). We generate the ground-truth signal according to Section~\ref{sec:ground_truth}, with $K_1 = 5$ jumps whose i.i.d. Gaussian amplitudes have the variance $\sigma_1^2 = 1$ for $s_1^{\mathrm{GT}}$. For the smooth component $s_2^{\mathrm{GT}}$, we generate a realization of a Gaussian white noise $w$ with the variance $\sigma_2^2 = 100$, such that $\Op L_2 \{ s_2^{\mathrm{GT}} \} = w$. The measurements are corrupted by some i.i.d. Gaussian white noise $\V n \in \R^M$ so that $\V y = \V \nu (s_{\mathrm{GT}}) + \V n$. We set the signal-to-noise ratio (SNR) between $\V \nu (s_{\mathrm{GT}})$ and $\V n$ to be 50 dB. For all models, we use the grid size $h = 1/2^7$. The regularization parameters are selected through a grid search to maximize the SNR of the reconstructed signal with respect to the ground truth.
 
   \begin{figure}
    \centering
    \includegraphics[width=\linewidth]{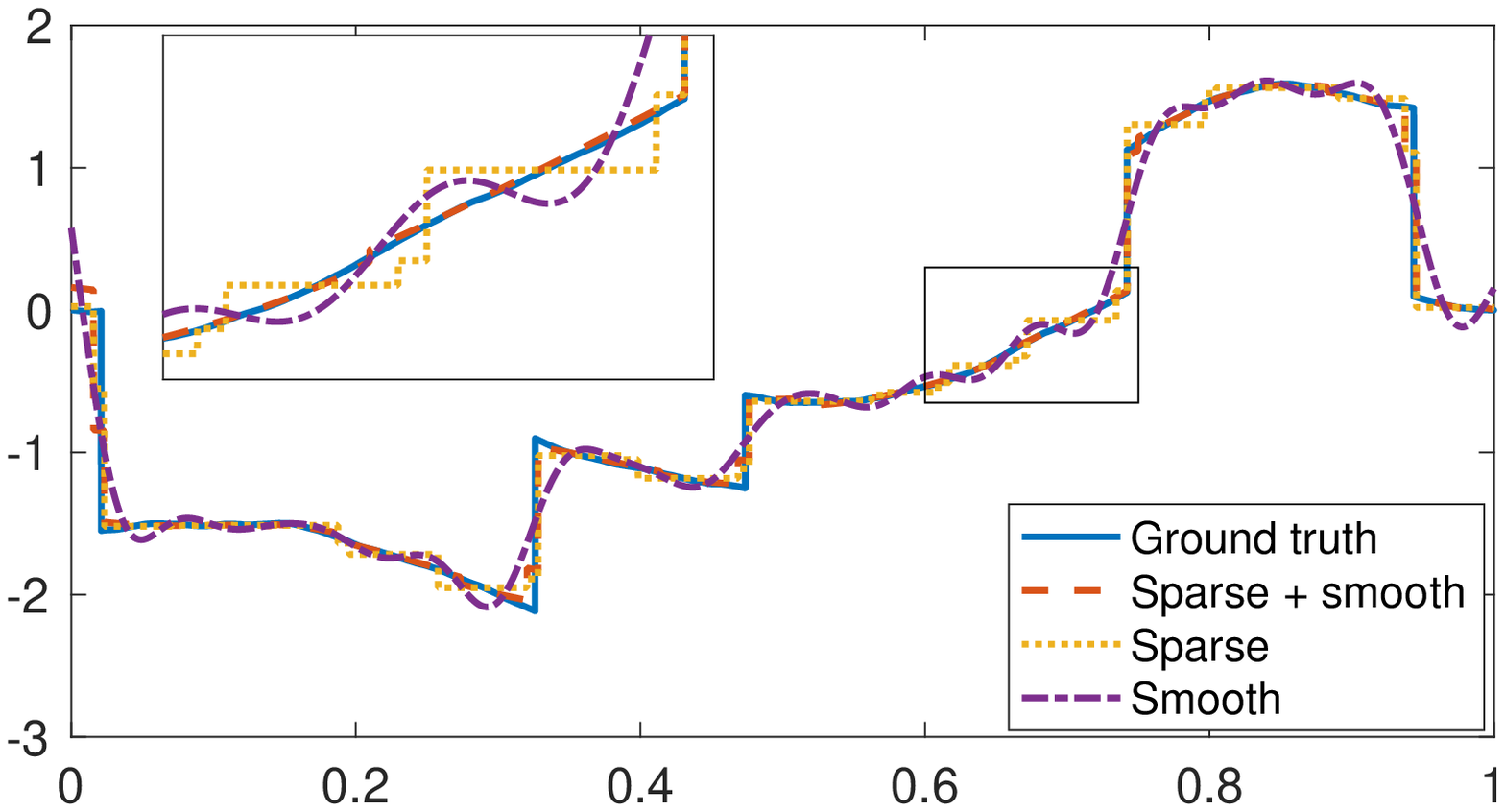}
    \caption{Comparison between our sparse-plus-smooth model and non-composite models with regularization operators $\Op L_1 = \Op D$, $\Op L_2 = \Op D^2$, and $M=50$ Fourier-domain measurements. \\
    Sparse plus smooth : SNR = 21.46 dB with $\lambda_1 = 8 \cdot 10^{-7}$ and $\lambda_2 = 5 \cdot 10^{-10}$. \\
     Sparse : SNR = 21.07 dB with $\lambda = 10^{-9}$. \\
     Smooth : SNR = 18.17 dB with $\lambda = 10^{-11}$.
    }
    \label{fig:comparison_regs}
\end{figure}

 The results of this comparison are shown in Figure~\ref{fig:comparison_regs}. As expected, due to the fact that our sparse-plus-smooth signal model matches the ground truth, our reconstructed signal yields a higher SNR (21.46 dB) than the sparse-only (21.07 dB) and smooth-only (18.17 dB) models. Moreover, our reconstruction is qualitatively much more satisfactory. As can be observed in the zoomed-in section, the sparse-only model is subject to a staircasing phenomenon in the smooth regions of the ground-truth signal, a well-known shortcoming of total-variation regularization. Our reconstruction does not suffer from this phenomenon and is remarkably accurate in the smooth regions. In fact, in our reconstruction, most of the error with respect to the ground truth comes from a lack of precision in the localization of the jumps due to gridding, which is costly in terms of SNR but does not affect much the visual impression. Finally, the smooth-only model fails both visually and in terms of SNR, due to its inability to represent sharp jumps. 

\section{Conclusion}
We have introduced a continuous-domain framework for the reconstruction of multicomponent signals. It assumes two additive components, the first one being sparse and the other being smooth. The reconstruction is performed by solving a regularized inverse problem, using a finite number of measurements of the signal. The form of a solution to this problem is given by our representer theorem. This form justifies the choice of the search space in which we discretize the problem. Our discretization is exact, in the sense that it amounts to solving a continuous-domain optimization problem restricted to our search space. The discretized problem is then solved using our ADMM-based algorithm, which we validate on simulated data.

\appendix

\subsection{Proof of Theorem~\ref{thm:continuous_RT}}
\label{app:proof_RT}

 \textbf{Preliminaries}
 
We extend the biorthogonal system $(\boldsymbol{\phi}_0,\V p_0)$ for $\mathcal{N}_0$ to the biorthogonal systems $(\tilde{\boldsymbol{\phi}}_1,\tilde{\V p}_1)$ and $(\tilde{\boldsymbol{\phi}}_2,\tilde{\V p}_2)$ for  $\mathcal{N}_{\mathrm{L}_1}$ and $ \mathcal{N}_{\mathrm{L}_2} $, respectively, where $\tilde{\boldsymbol{\phi}}_i = \begin{bmatrix}\boldsymbol{\phi}_0 &\boldsymbol{\phi}_i
\end{bmatrix}$ and $\tilde{\V p}_i = \begin{bmatrix}\V p_0 &\V p_i
\end{bmatrix}$ for $i \in \{1,2\}$. 
It is known from \cite[Theorem 4]{unser2019native} that any function $s_1\in \mathcal{M}_{\mathrm{L}_1}(\mathbb{R})$ has the unique decomposition 
\begin{equation}\label{Eq:decomposition1}
s_1 = \mathrm{L}_{1,\tilde{\boldsymbol{\phi}}_1}^{-1} \{ w\} +  \tilde{\mathbf{c}}_{0}^T \V p_0 +  \mathbf{c}_1^T \V p_1,
\end{equation}
 where $w \in \mathcal{M}(\mathbb{R})$, $\tilde{\mathbf{c}}_{0}= \boldsymbol{\phi}_0(s_1) \in\mathbb{R}^{N_0}$, $\mathbf{c}_1= \boldsymbol{\phi}_1(s_1)\in \mathbb{R}^{N_{0, 1}-N_0}$, and $\mathrm{L}_{1,\tilde{\boldsymbol{\phi}_1}}^{-1}$ is the pseudo-inverse operator of $\Op L_1$ for the biorthogonal system $(\tilde{\boldsymbol{\phi}}_1,\tilde{\mathbf{p}}_1)$ \cite[Section 3.2]{unser2019native}. Using this decomposition, we can equip the space $\mathcal{M}_{\mathrm{L}_1}(\mathbb{R})$ with the norm 
 \begin{equation}
     \|s_1\|_{\mathcal{M}_{\rm L_1}} \triangleq \|w\|_{\mathcal{M}} + \|\tilde{\bf c}_0\|_2 + \|{\bf c}_1\|_2.
 \end{equation}
 Finally, an element $s_1\in  \mathcal{M}_{\mathrm{L}_1}(\mathbb{R})$ is in the restricted search space $\Spc M_{\Op L_1, \V \phi_0}(\R)$ if and only if $\tilde{\mathbf{c}}_0= \boldsymbol{0}$. 
 
 Similarly, for any $s_2\in \mathcal{H}_{\mathrm{L}_2}(\mathbb{R})$, there is a unique decomposition
  \begin{equation}\label{Eq:decomposition2}
  s_2 = \mathrm{L}_{2,\tilde{\boldsymbol{\phi}_2}}^{-1} \{ h\} +\mathbf{c}_0^T \V p_0 +\mathbf{c}_2^T \V p_2,
  \end{equation}
   where ${\bf c}_0 =\boldsymbol{\phi}_0(s_2)\in \mathbb{R}^{N_0}$, $\mathbf{c}_2=\boldsymbol{\phi}_2(s_2)\in\mathbb{R}^{N_{0, 2}-N_0}$, and $h\in L_2(\mathbb{R})$. Consequently, the associated norm for the space $\mathcal{H}_{\rm L_2}(\mathbb{R})$ is defined as 
   \begin{equation}
     \|s_2\|_{\mathcal{H}_{\rm L_2}} \triangleq \|h\|_{L_2} + \|{\bf c}_0\|_2 + \|{\bf c}_1\|_2.
 \end{equation}

\textbf{Existence of a Solution} 

The first step is to prove that~\eqref{eq:continuous_pb} has a minimizer. We do so by reformulating the problem as the minimization of a weak*-lower semicontinuous functional over a weak*-compact domain. We then prove the existence by relying on the generalized Weierstrass theorem. 

We denote the cost at the trivial point $(0,0)$ as $\Spc J_0=\mathcal{J}(0,0) = E(\boldsymbol{0},\mathbf{y})$. Adding the constraint $\mathcal{J}(s_1,s_2)\leq \Spc J_0$ does not change the solution set of the original problem, as it must hold for any minimizer of~\eqref{eq:continuous_pb}. So, from now on, we assume that the cost functional is upper-bounded by  $\Spc J_0$. This readily implies that
\begin{align} 
&E(\boldsymbol{\nu}({s}_1+{s}_2),\mathbf{y})\leq \Spc J_0, \label{ineq:datacons}\\
&\|\mathrm{L}_1 \{{s}_1\}\|_{\mathcal{M}} \leq \frac{\Spc J_0}{\lambda_1}, \label{ineq:gtvcons}\\
&\| \mathrm{L}_2\{s_2\} \|_{{L_2}}\leq \sqrt{\frac{\Spc J_0}{\lambda_2}}\label{ineq:Hcons}. 
\end{align}

The coercivity of $E(\cdot,\mathbf{y})$ implies the existence of a constant $C_1>0$ such that
$ E(\mathbf{z} ,\mathbf{y})\leq \Spc J_0 \Rightarrow  \|\mathbf{z}\|_2\leq C_1$. Together with~\eqref{ineq:datacons}, this yields
\begin{equation}
\|\boldsymbol{\nu}({s}_1+{s}_2)\|_2 \leq C_1.    
\end{equation}

Moreover, since $\boldsymbol{\nu}$ is weak*-continuous over $\mathcal{M}_{\rm L_1}(\mathbb{R})$, it is also continuous. This is due to the fact that a Banach space (in this case, the predual of $\mathcal{M}_{\rm L_1}(\mathbb{R})$) is isometrically embedded in its double dual \cite{rudin1986real}. Moreover, by assumption, $\boldsymbol{\nu}$ is continuous over $\mathcal{H}_{\rm L_2}(\mathbb{R})$. Hence, there exists a second constant $C_2>0$ such that 
\begin{equation}\label{Ineq:nucont}
 \|{f}_1\|_{\mathcal{M}_{\mathbf{L}_1}} + \|{f}_2\|_{\mathcal{H}_{L_2}} \leq \frac{\Spc J_0}{\lambda_1} + \sqrt{\frac{\Spc J_0}{\lambda_2}} \Rightarrow \|\boldsymbol{\nu}({f}_1+{f}_2)\|_2 \leq  C_2. 
\end{equation}
Now, by taking
\begin{align}
&f_1 ={s}_1-\boldsymbol{{\phi}}_1(s_1)^T \V p_1, \nonumber \\
&f_2 ={s}_2-\boldsymbol{\phi}_0(s_2)^T \V p_0 -\boldsymbol{\phi}_2(s_2)^T \V p_2, 
\end{align}
and, together with~\eqref{ineq:gtvcons} and~\eqref{ineq:Hcons}, we deduce  that 
\begin{equation}\label{ineq:proj}
\left\|\boldsymbol{\nu}\left({s}_1-\boldsymbol{\phi}_1(s_1)^T{\V p}_1+{s}_2-\boldsymbol{\phi}_0(s_2)^T{\V p}_0-\boldsymbol{\phi}_2(s_2)^T{\V p}_2\right)\right\|_2 \leq  C_2. 
\end{equation}
By using the triangle inequality and the two bounds~\eqref{ineq:proj} and~\eqref{Ineq:nucont}, we have
\begin{equation}
\left\|\boldsymbol{\nu}\left( \boldsymbol{\phi}_1(s_1)^T{\V p}_1+\boldsymbol{\phi}_0(s_2)^T{\V p}_0+\boldsymbol{\phi}_2(s_2)^T{\V p}_2\right)\right\|_2 \leq C_1+C_2.
\end{equation}
Finally, the well-posedness assumption in Theorem~\ref{thm:continuous_RT} ensures the existence of a constant $B>0$ such that 
\begin{equation}\label{Ineq:wellposedness}
\forall q \in \mathcal{N}_{\mathrm{L}_1}+ \mathcal{N}_{\mathrm{L}_2}: B\|\boldsymbol{\phi}_i(q)\|_2\leq \|\boldsymbol{\nu}(q)\|_2, \quad i \in \{0,1,2\}.
\end{equation}
Hence, by taking
\begin{equation}
     q=\boldsymbol{\phi}_1(s_1)^T{\V p}_1+\boldsymbol{\phi}_0(s_2)^T{\V p}_0+\boldsymbol{\phi}_2(s_2)^T{\V p}_2
\end{equation}
 and by applying the Inequality \eqref{Ineq:wellposedness}, we have that
\begin{align}\label{Ineq:NullspaceBound}
\|\boldsymbol{\phi}_1(s_1) \|_{2},\|\boldsymbol{\phi}_0(s_2) \|_2, \|\boldsymbol{\phi}_2(s_2)\|  \leq \frac{ C_1+C_2 }{B}.
\end{align}
 Therefore, the original problem ~\eqref{eq:continuous_pb} is equivalent to the constrained minimization problem
\begin{align}\label{Eq:ConsRegression}
 \min_{\substack{{s}_1\in \MLone \\{s}_2\in \HLtwo}}  \mathcal{J}(s_1,s_2) \quad \text{s.t.} \quad \|{s}_1\|_{\mathcal{M}_{\mathbf{L}_1}} \leq A_1, \| {s}_2 \|_{\mathcal{H}_{\mathrm{L}_2}}\leq  A_2,
\end{align}
where $A_1= \frac{\Spc J_0}{\lambda_1}+ \frac{C_1+C_2}{B}$ and $A_2= \sqrt{\frac{\Spc J_0}{\lambda_2}}+ \frac{C_1+C_2}{B}$. 

The cost functional in~\eqref{Eq:ConsRegression}, which is the same as in~\eqref{eq:continuous_pb}, is weak* lower-semicontinuous. Moreover, the constraint cube is weak*-compact in the product topology due to the Banach-Anaoglu theorem. Hence,~\eqref{Eq:ConsRegression} reaches its infimum, and so does~\eqref{eq:continuous_pb}. 

\textbf{Form of the Solution}
Let $(\tilde{s}_1,\tilde{s}_2)$ be a solution of \eqref{eq:continuous_pb} and consider the minimization problem
\begin{equation}\label{eq:separateL1}
\min_{ {s}_1\in \mathcal{M}_{\mathrm{L}_1,\boldsymbol{\phi}_0 }(\mathbb{R}) }    \|\mathrm{L}_1\{{s}_1\}\|_{\mathcal{M}}  \quad \text{s.t.} \quad \boldsymbol{\nu}({s}_1 ) =\boldsymbol{\nu}(\tilde{s}_1 ).
\end{equation}
Unser \etal have shown in \cite{unser2017splines} that~\eqref{eq:separateL1} has a minimizer $s_1^*$ of the form ~\eqref{eq:continuous_sol_s1}. One can also readily verify that $(s_1^*,\tilde{s}_2)$ is a minimizer of the original problem. Similarly, one can consider the   minimization problem
\begin{equation}\label{eq:separateL2}
\min_{ {s}_2\in \mathcal{H}_{\mathrm{L}_2}(\mathbb{R})  }    \|\mathrm{L}_2\{{s}_2\}\|_{L_2}  \quad \text{s.t.} \quad \boldsymbol{\nu}({s}_2 ) =\boldsymbol{\nu}(\tilde{s}_2).
\end{equation}
It is known from \cite[Theorem 3]{gupta2018continuous} that~\eqref{eq:separateL2} has a minimizer $s_2^*$ of the form~\eqref{eq:continuous_sol_s2}. Again, $(s_1^*,s_2^*)$ is a solution of the original problem, which matches the form specified by Theorem~\ref{thm:continuous_RT}.

\textbf{Uniqueness of the Second Component}
To prove the final statement of Theorem~\ref{thm:continuous_RT}, let us consider two arbitrary pairs of solutions $(\bar{f}_1 ,\bar{f}_2 )$ and $( \tilde{f}_1 ,\tilde{f}_2 )$  of Problem~\eqref{eq:continuous_pb} and let us denote by $\Spc J_{\min}$ their minimal cost value. The convexity of the cost functional yields that, for any $\alpha\in (0,1)$ and $ (f_{\alpha,1},f_{\alpha,2}) = \alpha (\bar{f}_1 ,\bar{f}_2 ) + (1-\alpha) ( \tilde{f}_1 ,\tilde{f}_2 )$, we have 
 \begin{equation}\label{Ineq:cnvx}
 \mathcal{J}(f_{\alpha,1},f_{\alpha,2}) \leq \alpha \mathcal{J}(\bar{f}_1 ,\bar{f}_2 ) + (1-\alpha)\mathcal{J}( \tilde{f}_1 ,\tilde{f}_2 )  = \Spc J_{\min}.
 \end{equation}
 The optimality of $(\bar{f}_1 ,\bar{f}_2 )$ and $( \tilde{f}_1 ,\tilde{f}_2 )$ implies that~\eqref{Ineq:cnvx} must be an equality. In particular, we must have that 
\begin{equation}
    \|{\rm L}_2\{f_{\alpha,2}\}\|_2^2 = \alpha \|{\rm L}_2\{\bar{f}_{2}\}\|_2^2 + (1-\alpha) \|{\rm L}_2\{\tilde{f}_{2}\}\|_2^2.
\end{equation}
Now, due to the strict convexity of  $\|\mathrm{L}_2\{ \cdot \}\|_{L_2}^2$, we deduce that  $\mathrm{L}_2\{  \bar{f}_2 - \tilde{f}_2\}=0$, and hence that $(\bar{f}_2 - \tilde{f}_2) \in \mathcal{N}_{\rm L_2}$. This implies that all solutions have the same second component up to a term in the null space of $\mathrm{L}_2$. 
\subsection{Choice of Boundary Condition Functionals $\V \phi_0$}
\label{app:boundary_conditions}
We discuss here our choice of the boundary-condition functionals $\V \phi_0$ for certain common choices of operators $\Op L_i$. We focus on multiple-order derivative operators $\Op L_i = \Op D^{N_{0, i}}$, although the discussion remains valid for the more general class of rational operators \cite{unser2005cardinalII}, which, to the best of our knowledge, is the largest class of spline-admissible operators that satisfy the first assumption in Section~\ref{sec:assumptions_finite_pb}. The null spaces $\Spc N_{\Op L_i}$ are thus the spaces of polynomials of degree smaller than $N_{0, i}$. We assume for now that we have $N_{0, 1} \leq N_{0, 2}$, in which case we have $\Spc N_{\Op L_1} \subset \Spc N_{\Op L_2}$ and thus $\Spc N_0 = \Spc N_{\Op L_1}$ and $N_0 = (D_1 - 1)$. Then, for any $\epsilon > 0$, the functionals $\phi_0 =  \frac{1}{\epsilon}\mathrm{rect}\left(\frac{\cdot}{\epsilon} \right)$ for $N_0 = 1$ and 
\begin{align}
\label{eq:boundary_canonical}
\V \phi_0 = \left( \delta, \ldots , \delta^{(N_0 - 2)},  \delta^{(N_0 - 1)} \ast \frac{1}{\epsilon}\mathrm{rect}\left(\frac{\cdot}{\epsilon} \right) \right) \ast \delta( \cdot - \frac \epsilon 2),
\end{align}
for $N_0 > 1$, where $\mathrm{rect}(t) = 1$ for $-1/2 \leq t < 1/2$ and 0 elsewhere, are valid choices of a biorthogonal system matched to the basis $\V p_0 = \left( 1, (\cdot), \ldots , \frac{(\cdot)^{N_0-1}}{(N_0 - 1)!} \right) \ast \delta( \cdot - \frac \epsilon 2)$ of $\Spc N_0$. Indeed, one can easily verify that this choice satisfies the biorthonormality relation $\langle \phi_{0, i}, p_{0, j} \rangle = \delta_{i-j}$ (Kronecker delta). Moreover, we have $\phi_{0, i} \in \Spc{X}_{\Op L_1}$ (the predual of $\Spc{M}_{\Op L_1}(\R)$), which implies that $(\V p_0, \V \phi_0)$ is indeed a valid biorthogonal system of $\Spc N_0$ \cite[Proposition 5]{unser2019native}. The fact that $\phi_{0, i} \in \Spc{X}_{\Op L_1}$ is proved in \cite{unser2019representer} for the case $N_0 = 2$; this proof can readily be extended to higher orders. 

The boundary conditions~\eqref{eq:boundary_canonical}, along with a choice of $\epsilon$ such that $\frac \epsilon h$ is arbitrarily small, is numerically equivalent to $\V \phi_0(f) = (f(0), \ldots , f^{(N_0-1)}(0^+))$, where $f^{(N_0-1)}(0^+)$ is the right limit of $f^{(N_0-1)}$ at 0. It can easily be shown in this case that, for $s_1 = \sum_{k \in \Z} c_1[k] \varphi_{1, k} \in V_1(\R)$ with $\varphi_{1, k} = \beta_{\Op L_1}( \cdot - kh)$, we have $\V \phi_0(f) = \V 0 \Leftrightarrow c_1[-N_0 + 1] = \cdots = c_1[0] = 0$, which leads to the constraints $\M A \V c_1 = (c_{1, 1}, \ldots , c_{1, N_0}) = \V 0$ in Problem~\eqref{eq:finite_pb}. This choice simplifies the optimization task by reducing the dimension of the problem, whereas other boundary conditions could lead to more complicated linear constraints and would make the optimization task more difficult.

So far, we have assumed that $N_{0, 1} \leq N_{0, 2}$ since this condition is always satisfied for the most common case $\Op L_1 = \Op D$. However, if $N_{0, 1} > N_{0, 2}$, then it is more convenient to apply the boundary conditions $\V \phi_0$ to the second component, which leads to the same simple boundary conditions $\M A \V c_2 = (c_{2, 1}, \ldots , c_{2, N_0}) = \V 0$. By default, we implicitly consider the more common case $N_{0, 1} \leq N_{0, 2}$ throughout the paper and thus impose the boundary conditions on the first component.

\subsection{Expression of the Regularization Matrix $\M L_2$}
\label{app:L_2}

\textbf{Factorization of the Autocorrelation Filter}

To specify the regularization matrix for the second component, we must first express in a convenient form the autocorrelation filter $\autocorr$ defined in Proposition~\ref{prop:autocorrelation}. This is done in Proposition~\ref{prop:autocorrelation_B-splines}, which gives the expression of $\autocorr$ and its ``square root'' $g$ for the choice of basis function $\varphi_{2} = \beta_{\Op L_2 ^\ast \Op L_2}$ made in Section~\ref{sec:choice_phi2}.
\begin{proposition}[Factorization of the autocorrelation filter]
\label{prop:autocorrelation_B-splines}
Let the assumptions in Section~\ref{sec:assumptions_finite_pb} be satisfied, and let $\varphi_{2} = \beta_{\Op L_2 ^\ast \Op L_2} = \beta_{\Op L_2} \ast \beta_{\Op L_2}^\vee$.
Then, the basis $\{ \varphi_{2, k} \}_{k \in \Z}$ 
forms a Riesz basis as required in Section~\ref{sec:basis_functions}, and the autocorrelation filter $\autocorr$ defined in Proposition~\ref{prop:autocorrelation} is of the form
\begin{align}
\label{eq:autocorrelation_B-spline}
\autocorr = d_{\Op L_2} \ast d_{\Op L_2}^\vee \ast \bseq,
\end{align}
where $\bseq[k] = \beta_{\Op L_2 ^\ast \Op L_2}(k)$ is the B-spline kernel of the operator $\Op L_2 ^\ast \Op L_2$, which is a positive-semidefinite filter supported in $[-(D_2 - 2) \ldots  D_2 - 2 ]$.
The filter $\autocorr$ can thus be factorized as $\autocorr = \sqrta \ast \sqrta^\vee$ with
\begin{align}
\label{eq:sqrt_autocorr}
\sqrta =  d_{\Op L_2} \ast  \sqrtb,
\end{align}
where the filter $\sqrtb$ satisfies $\bseq = \sqrtb \ast (\sqrtb)^\vee$ and is of length $B = (D_2 - 1)$.
\end{proposition}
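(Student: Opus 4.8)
The plan is to reduce the statement to two structural facts: that $\varphi_2 = \beta_{\Op L_2^\ast \Op L_2}$ is itself a genuine B-spline in the sense of Definition~\ref{def:B-spline}, and that applying $\Op L_2$ to it produces a finite combination of integer shifts of $\beta_{\Op L_2}^\vee$. First I would verify the factorization $\beta_{\Op L_2^\ast \Op L_2} = \beta_{\Op L_2} \ast \beta_{\Op L_2}^\vee$ in the Fourier domain. Writing $\Spc F\{\beta_{\Op L_2}\} = (\sum_{k} d_{\Op L_2}[k]\ee^{-\jj k (\cdot)})/\widehat{L}_2$ and noting that $\Spc F\{\beta_{\Op L_2}^\vee\}$ is its complex conjugate, the product of the two transforms is $|\sum_{k} d_{\Op L_2}[k]\ee^{-\jj k (\cdot)}|^2 / |\widehat{L}_2|^2$. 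Since $\Op L_2^\ast \Op L_2$ has frequency response $|\widehat{L}_2|^2$ and $d_{\Op L_2} \ast d_{\Op L_2}^\vee$ is a valid filter for it (its symbol being $|\sum_k d_{\Op L_2}[k]\ee^{-\jj k(\cdot)}|^2$), this product is exactly $\Spc F\{\beta_{\Op L_2^\ast \Op L_2}\}$. This simultaneously identifies $\varphi_2$ as a legitimate B-spline, so that $\{\varphi_{2,k}\}_{k \in \Z}$ is a Riesz basis by \cite[Theorem 2.7]{amini2018universal}.

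Next I would compute $\Op L_2\{\varphi_2\}$. Because $\Op L_2$ is LSI and hence commutes with convolution, the innovation formula~\eqref{eq:innovation_B-spline} gives $\Op L_2\{\varphi_2\} = (\Op L_2\{\beta_{\Op L_2}\}) \ast \beta_{\Op L_2}^\vee = \sum_{l \in \Z} d_{\Op L_2}[l]\,\beta_{\Op L_2}^\vee(\cdot - l)$, a finite combination of integer shifts of the $L_2$ function $\beta_{\Op L_2}^\vee$. Substituting this expression and its integer translate into the defining inner product~\eqref{eq:autocorrelation} and expanding the double sum yields $\autocorr[k] = \sum_{l, m} d_{\Op L_2}[l]\, d_{\Op L_2}[m]\, \langle \beta_{\Op L_2}^\vee(\cdot - k - l), \beta_{\Op L_2}^\vee(\cdot - m) \rangle_{L_2}$. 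A change of variables shows that each inner product equals $(\beta_{\Op L_2} \ast \beta_{\Op L_2}^\vee)(k + l - m) = \bseq[k + l - m]$ with $\bseq[n] = \beta_{\Op L_2^\ast \Op L_2}(n)$. Recognizing the resulting triple sum as a convolution, after relabeling the dummy indices $l$ and $m$, gives the claimed identity $\autocorr = d_{\Op L_2} \ast d_{\Op L_2}^\vee \ast \bseq$.

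It then remains to characterize and factor $\bseq$. Writing $\bseq[n] = (\beta_{\Op L_2} \ast \beta_{\Op L_2}^\vee)(n) = \langle \beta_{\Op L_2}, \beta_{\Op L_2}(\cdot - n) \rangle_{L_2}$ exhibits $\bseq$ as the Gram sequence of the integer shifts of $\beta_{\Op L_2}$, so it is positive semidefinite; its symbol $\sum_{n} \bseq[n] \ee^{-\jj n (\cdot)} = \sum_{k} |\Spc F\{\beta_{\Op L_2}\}(\cdot + 2\pi k)|^2$ is moreover bounded away from zero by the Riesz property of $\{\beta_{\Op L_2}(\cdot - k)\}$. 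For the support, the finite-support assumption of Section~\ref{sec:assumptions_finite_pb} places $\beta_{\Op L_2^\ast \Op L_2}$ on $[-(D_2 - 1), D_2 - 1]$, and this B-spline vanishes at the two endpoints of its support, so its integer samples are confined to $[-(D_2 - 2) \ldots D_2 - 2]$. Its symbol is thus a strictly positive trigonometric polynomial whose extreme frequencies are $\ee^{\pm \jj (D_2 - 2)(\cdot)}$, and the Fej\'er--Riesz theorem provides a (minimum-phase) factor $\sqrtb$ of length $(D_2 - 1)$ with $\bseq = \sqrtb \ast (\sqrtb)^\vee$. Setting $\sqrta = d_{\Op L_2} \ast \sqrtb$ and using $(d_{\Op L_2} \ast \sqrtb)^\vee = d_{\Op L_2}^\vee \ast (\sqrtb)^\vee$ then delivers $\autocorr = \sqrta \ast \sqrta^\vee$.

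The Fourier identity and the index bookkeeping in the double sum are routine; I expect the main obstacle to be the last step, namely pinning down the exact length $B = (D_2 - 1)$ of $\sqrtb$. This rests on showing that $\bseq$ occupies precisely $2(D_2 - 2) + 1$ taps, so that its symbol has degree $(D_2 - 2)$ and Fej\'er--Riesz returns a factor with exactly $(D_2 - 1)$ coefficients. The count hinges on the endpoint-vanishing of $\beta_{\Op L_2^\ast \Op L_2}$ and, for operators beyond $\Op D^{N_0}$, on the precise support of $\beta_{\Op L_2}$ guaranteed by the finite-support assumption, which is the one place where the argument must be handled with care rather than by symbol manipulation alone.
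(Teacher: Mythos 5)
Your proposal is correct and follows essentially the same route as the paper's own proof: both reduce $\autocorr$ to $d_{\Op L_2} \ast d_{\Op L_2}^\vee \ast \bseq$ by exploiting the B-spline structure of $\varphi_2$, both establish positive semidefiniteness of $\bseq$ via its interpretation as the Gram sequence $\bseq[k] = \langle \beta_{\Op L_2}, \beta_{\Op L_2}(\cdot - k) \rangle_{L_2}$, and both conclude by spectral factorization of the symmetric, finitely supported filter $\bseq$ --- your appeal to Fej\'er--Riesz is precisely what the paper carries out by hand, pairing the zeros $z_k, z_k^{-1}$ of $B(z)$ and taking $\sqrtb$ as the inverse $z$-transform of one factor. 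The only differences are matters of execution rather than substance: the paper derives the key identity through the duality pairing $\langle \Op L_2^\ast \Op L_2 \{\varphi_{2,k}\}, \varphi_{2,0} \rangle$ combined with the Dirac innovation of $\beta_{\Op L_2^\ast \Op L_2}$, whereas you stay in $L_2$ by applying only $\Op L_2$ and expanding a double sum (and you additionally verify the factorization $\beta_{\Op L_2^\ast \Op L_2} = \beta_{\Op L_2} \ast \beta_{\Op L_2}^\vee$ and the Riesz-basis claim explicitly, two points the paper's proof takes for granted).
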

\begin{proof}
We have that
\begin{align*}
\autocorr [k] &= \langle \Op L_2\{ \varphi_{2, k} \} , \Op L_2\{ \varphi_{2, 0} \}   \rangle_{L_2} \\
&= \langle  \Op L_2^\ast \Op L_2\{ \varphi_{2, k} \} , \varphi_{2, 0}  \rangle_{\Spc H_{\Op L_2}'  \times \Spc H_{\Op L_2}} \\
&=  \langle \sum_{k' \in \Z} d_{\Op L_2^\ast \Op L_2}[k] \delta( \cdot - (k+ k')),  \varphi_{2, 0}  \rangle_{\Spc H_{\Op L_2}' \times \Spc H_{\Op L_2}} \\
&= \sum_{k' \in \Z} d_{\Op L_2^\ast \Op L_2 }[k] \bseq[k + k'] \\
&= (d_{\Op L_2^\ast \Op L_2 } \ast \bseq^\vee)[-k] \\
&=  (d_{\Op L_2^\ast \Op L_2} \ast \bseq)[k],
\end{align*}
where $\langle \cdot, \cdot \rangle_{\Spc H_{\Op L_2}' \times \Spc H_{\Op L_2}}$ denotes the duality product between $\Spc H_{\Op L_2}(\R)$ and its dual $\Spc H_{\Op L_2}'(\R)$, and the last line results from the symmetry of $\autocorr$ and $\bseq$.

Next, we prove that $\bseq$ is positive-semidefinite. Indeed, for any finitely supported filter $c$, we have that
\begin{align}
\sum_{k, k' \in \Z} c[k] c[k'] \bseq[k - k']= \left\Vert \sum_{k \in \Z} c[k] \beta_{\Op L_2}(\cdot - k) \right\Vert_{L_2}^2 \geq 0,
\end{align}
where we have used the property
\begin{align}
\bseq[k] = (\beta_{\Op L_2} \ast \beta_{\Op L_2}^\vee)(k) = \langle \beta_{\Op L_2}, \beta_{\Op L_2}(\cdot - k) \rangle_{L_2}.
\end{align}
Finally, to prove the existence of $\sqrtb$, we notice that $\bseq$ has the finite support $[-(B-1) \ldots B-1]$ due to the finite support $(-D_2, D_2)$ of $\beta_{\Op L_2 ^\ast \Op L_2}$, and we have $B = (D_2 - 1)$. Since $\bseq$ is also symmetric, its $z$-transform satisfies $B(z) = B(z^{-1})$; therefore, for any zero $z_k$ of $B(z)$, $z_k^{-1}$ is also a zero. Moreover, it is well known that $B(\pm 1) \neq 0$, so that zeros must come in pairs $z_k \neq z_k^{-1}$. Hence, $B(z)$ can be written as $B(z) = \prod_{k=1}^B (1 - z_k z) (1 - z_k z^{-1})$. Hence, to take $\sqrtb$ to be the inverse $z$-transform of $B^{1/2}(z) = \prod_{k=1}^B (1 - z_k z^{-1})$ is a valid choice (we clearly have $\bseq = \sqrtb \ast (\sqrtb)^\vee$), and \eqref{eq:sqrt_autocorr} is readily obtained.
\end{proof}

We summarize in Table~\ref{tab:supports} the different filters and their mutual relations. Without loss of generality, we take the filters $d_{\Op L_i}$ for $i \in \{1, 2\}$ to be causal, which leads to causal B-splines. These filters will be useful for the definition of the regularization matrix $\M L_2$.
\begin{table}
\scriptsize
\begin{center}
\begin{tabular}{ C{1.5cm}|C{1.8cm}|C{1.8cm}|C{1.9cm}|C{2.4cm}|C{2.9cm}|C{2.3cm} } 
\hline
\hline
 & $d_{\Op L_1}$ & $d_{\Op L_2}$ & $\autocorr = \bseq \ast d_{\Op L_2} \ast d_{\Op L_2}^\vee$ & $\sqrta = \sqrtb \ast d_{\Op L_2}$ & $\bseq = \left(\beta_{\Op L_2^\ast \Op L_2}(k) \right)_{k \in \Z}$ & $\sqrtb$\\
\hline
\hline
Description & Finite-difference filter for $\Op L_1$ & Finite-difference filter for $\Op L_2$ & Autocorrelation filter for $\Op L_2$ & ``Square root" of $\autocorr$ ($\autocorr = \sqrta \ast \sqrta^\vee$) & Samples of basis function ($\Op L_2^\ast \Op L_2$ B-spline) & ``Square root" of $\bseq$ ($\bseq = \sqrtb \ast (\sqrtb)^\vee$) \\
\hline
Introduced in & Definition~\ref{def:B-spline} & Definition~\ref{def:B-spline} & Proposition~\ref{prop:autocorrelation}  & Proposition~\ref{prop:autocorrelation_B-splines} & Proposition~\ref{prop:autocorrelation_B-splines} & Proposition~\ref{prop:autocorrelation_B-splines} \\
\hline
Support length & $D_1$ & $D_2$ & $2 G - 1 = 4 D_2 - 5 $ & $G = 2 D_2 - 2$ &  $2 B - 1 = 2 D_2 - 3$ &  $B = D_2 - 1 $ \\ 
\hline
Support & $[0 \ldots D_1-1]$ & $[0 \ldots D_2-1]$ & $[-(G-1)$ $ \ldots G-1]$ & $[0 \ldots G-1]$ & $[-(B-1) \ldots B-1]$ & $[0 \ldots B-1]$ \\ 
\hline
\hline
Example $\Op L_2 = \Op D$ & & [1, -1] & [-1, 2, -1] & [1, -1] & [1] & [1] \\
\hline
Example $\Op L_2 = \Op D^2$ & & [1, -2, 1] & $\frac{1}{6}[1, 0, -9, 16$ $, -9, 0, 1]$ & $C[1, \sqrt{3}, {-( 3 + 2\sqrt{3})}$ $, {2 + \sqrt{3}}]$ & $\frac{1}{6} [1, 4, 1]$ & $C [1, 2 + \sqrt{3}]$ \\
\hline
\hline
\end{tabular}
\caption{Relevant filters and their supports ($C = \sqrt{\frac{2 - \sqrt{3}}{6}}$).}
\label{tab:supports}
\end{center}
\end{table}

\textbf{Expression of $\M L_2$}

The regularization matrix $\M L_2 \in \R^{(N_2 - 1) \times N_2}$ for the smooth component is given by
\begin{align}
\label{eq:regularization_matrix2}
\M L_2 = \begin{pmatrix} 
 \M M^- & \vline & & \M 0 &   \\
 \hline 
 & & \M M &  & \\
 \hline
 & \M 0 &  & \vline & \M M^+
\end{pmatrix}.
\end{align}
The central matrix $\M M \in \R ^{(N_2 - G + 1) \times N_2}$ is given by
\begin{align}
\M M = \begin{pmatrix} 
\sqrta[G-1] & \cdots & \sqrta[0] & 0 & \cdots & 0 \\
0 & \ddots & & \ddots & \ddots &\vdots  \\
\vdots & \ddots & \ddots & & \ddots & 0 \\
0 & \cdots & 0 & \sqrta[G-1] & \cdots & \sqrta[0] \\
\end{pmatrix},
\end{align}
where $g$ is defined in Proposition~\ref{prop:autocorrelation_B-splines}. The matrices $\M M^\pm \in \R^{ (B - 1) \times (G - 1)}$ are defined as 
$[\M M^-]_{i,j} = \sqrta^{-(B_2-i)}[G - B + (i-1) - (j-1)]$ and $[\M M^+]_{i,j} = \sqrta^{+i}[G + (i-1) - j)]$  for $1 \leq i \leq (B_2 - 1)$ and $1 \leq j \leq (G - 1)$, where the filter $\sqrta^{\pm k}$ are given by $\sqrta^{-k}  = b^{1/2}\vert_{\{0, \ldots , B-1-k \} } \ast d_{\Op L_2}$ (supported in $[0 \ldots  G-1-k ]$) and $\sqrta^{+k}  = b^{1/2}\vert_{\{k, \ldots, B-1 \} } \ast d_{\Op L_2}$ (supported in $\{k, \ldots  G-1 \}$). Here, the notation $a\vert_{ J}$ refers to the filter $a$ restricted to the set $J$ of indices, with $a\vert_{J}[k] = a[k]$ if $k \in J$, and $a\vert_{J}[k] = 0$ otherwise.

As an illustration, for $\Op L_2 = \Op D$, we have that $B = 1$ and, hence, simply that $\M L_2 = \M M$. For $\Op L_2 = \Op D^2$, we have $B = 2$ and
\begin{align}
\label{eq:regularization_matrix2_D^2}
\M L_2 = \begin{pmatrix} 
 C & -2 C & C & 0 &\cdots &\cdots & 0 \\
 \hline 
 & & & \M M &  & \\
 \hline
0 &\cdots &\cdots & 0 & C' & -2 C' & C'
\end{pmatrix},
\end{align}
where $C = \sqrt{\frac{2 - \sqrt{3}}{6}}$ and $C' = C (2 + \sqrt{3})$.

\subsection{Proof of Proposition~\ref{prop:finite_pb}}
\label{app:finite_problem}

Let $s_i = \sum_{k\in \Z} c_i[k] \varphi_{i, k}$ with $c_i \in V_i(\Z)$ for $i\in \{1, 2\}$. The filters $c_i$ are assumed to have values determined by the vector $\V c_i = (c_i[m_i], \ldots , c_i[M_i])$ at certain points. By definition of $m_i$ and $M_i$, the values of $c_i$ outside these intervals do not affect the measurements $\V \nu(s_i)$, and we clearly have that $\V \nu (s_i) = \M H_i \V c_i$. Therefore, these coefficients solely affect the regularization terms. We now show that, for a solution $(c_1, c_2) \in \Spc S_{\mathrm{d}}$ to problem~\eqref{eq:discrete_pb}, the coefficients are uniquely determined by the vectors $\V c_i$, and that the regularization terms $\Vert d_{\Op L_1} \ast c_1 \Vert_{\ell_1}$ and $\langle c_2, \autocorr \ast c_2 \rangle_{\ell_2}$ can thus be expressed exclusively in terms of these vectors.

Concerning the first component, this is proved in \cite[Proposition 2]{debarre2019b}, which shows that $\Vert \Op L_1\{ s_1 \} \Vert_\Spc{M} = \Vert \M L_1 \V c_1 \Vert_1$. The additional constraint $\M A \V c_1$ comes from $\V \phi_0(s_1) = \V 0$ imposed on the search space $V_1(\Z)$ of Problem~\eqref{eq:discrete_pb}.

We now consider the regularization term for the component $\langle c_2, \autocorr \ast c_2 \rangle_{\ell_2}$. By Proposition~\ref{prop:autocorrelation_B-splines}, we have that $\autocorr = \sqrta \ast \sqrta^\vee$, and, hence, that $\langle c_2, \autocorr \ast c_2 \rangle_{\ell_2} = \langle \sqrta \ast c_2, \sqrta \ast c_2 \rangle_{\ell_2} = \Vert \sqrta \ast c_2 \Vert_{\ell_2}^2$, where $\sqrta = \sqrtb \ast d_{\Op L_2}$. We also have $ \sqrta \ast c_2 =  \sqrtb \ast \autocorr$, where $a = d_{\Op L_2} \ast c_2$ is supported in $[1 \ldots M_2]$ by definition of the native space $V_2(\Z)$ given in \eqref{eq:native_space_sequence2}. Since $a [n] = \sum_{k = 0} ^{D_2 - 1} d_{\Op L_2}[k] c_2[n - k]$, $a [n]$ is entirely determined by the vector $\V c_2$ for $1 \leq n \leq M_2$, which justifies our choice of the space $V_2(\Z)$. For values of $n$ outside this interval, there is a unique way of setting the coefficients $c_2[k]$ in order to nullify $a [n]$ and thus obtain that $c_2 \in V_2(\Z)$. For example, $c_2[M_2 + 1]$ can be set to nullify $a [M_2+1]$ based on the $(D_2-1)$ previous coefficients of $c_2$, and, similarly, all the $c_2[n]$ for $n > M_2 + 1$ can be set recursively to nullify all the $a [k]$ for all $k > M_2+1$. The same argument can be made to show that there is a unique choice $c_2[n]$ for $n < m_2$ that nullifies $a [k]$ for all $k < 1$.

We now compute the values of $(\sqrta \ast c_2)[n]$ in different regimes for $n$. We have that $(\sqrta \ast c_2)[n] = (b^{1/2} \ast a) [n] = \sum_{k=0}^{B_2-1} b^{1/2}[k] a [n-k]$ where $a$ is supported in $[1 \ldots M_2]$. For $B_2 \leq n \leq M_2$, this sum is solely affected by the coefficients $\V c_2 = (c_2[m_2], \ldots , c_2[M_2])$, so that the corresponding terms can be written in matrix form as $\M M \V c$ (the central part of the $\M L_2$ matrix defined in \eqref{eq:regularization_matrix2}). Outside this interval, for example for $n = M_2 + 1$, we have that $(\sqrta \ast c_2)[n] = \sum_{k\in \Z} b^{1/2} \vert_{\{ 1, \ldots , B_2-1\}}[k] a [n-k]$, since the $k=0$ term is anyway nullified by the fact that $a [n] = 0$. An analogous reformulation allows us to have $(\sqrta \ast c_2)[n]$ only depend on the $\V c_2$ coefficients. The same reformulation for all the coefficients $M_2+1 \leq n \leq (M_2 + B_2-1)$ leads to the matrix $\M M^+$ in \eqref{eq:regularization_matrix2}, while a similar argument for coefficients $(\sqrta \ast c_2)[n]$ with $1 \leq n \leq (B_2-1)$ leads to the matrix $\M M^-$.

We have thus proved that the solutions $(c_1, c_2) \in \Spc S_{\mathrm{d}}$ to Problem \eqref{eq:discrete_pb} are uniquely determined by their coefficients $\V c_i = (c_i[m_i], \ldots , c_i[M_i])$ for $i \in \{1, 2 \}$, and that the regularization terms can be written $\Vert d_{\Op L_1} \ast c_1 \Vert_{\ell_1} = \Vert \M L_1  \V c_1 \Vert_1$ and $\Vert \sqrta \ast c_2 \Vert_{\ell_2}^2 = \Vert \M L_2 \V c_2 \Vert_2^2$. This, together with the fact that $\V \nu(\sum_{k\in \Z} c_i[k] \varphi_{i, k}) = \M H_i \V c_i$, proves that $\Spc J_\mathrm{d}(c_1, c_2) = J(\V c_1, \V c_2)$. Conversely, for any $(\V c_1, \V c_2) \in \R^{N_1} \times \R^{N_2}$, there is a unique extension of these vectors to filters $c_i \in V_i(\R)$ such that $\V c_i = (c_i[m_i], \ldots , c_i[M_i])$ and $\Spc J_\mathrm{d}(c_1, c_2) = J(\V c_1, \V c_2)$. These extensions are explicited in \cite[Proposition 2]{debarre2019b} for $c_1$ and earlier in this proof for $c_2$. This proves the existence of the bijective linear mapping between the solution sets $\Spc S_\mathrm{d}$ and $\Spc S$ specified in Proposition~\ref{prop:finite_pb}.

\bibliographystyle{IEEEtran}
\bibliography{main}

\end{document}